\newcommand{\R}{\mathbb{R}}
\newcommand{\N}{\mathbb{N}}
\newcommand{\B}{\mathcal{B}}
\newcommand{\M}{\mathcal{M}}
\newcommand{\F}{\mathcal{F}}
\renewcommand{\mid}{:}
\newcommand{\dmax}{\delta}
\renewcommand{\P}{\mathbb{P}}
\renewcommand{\vec}{\mathbf}
\spnewtheorem*{assumption}{Assumption}{\itshape}{\rmfamily}
\begin{document}

\title{Resource Competition on Integral Polymatroids}
\titlerunning{Resource Competition on Integral Polymatroids}
\author{Tobias Harks\inst{1} \and Max Klimm\inst{2} \and Britta Peis\inst{3}}

\institute{Department of Quantitative Economics, Maastricht University, the Netherlands
\email{t.harks@maastrichtuniversity.nl}
\and
Department of Mathematics, Technische Universit\"at Berlin, Germany
\email{klimm@math.tu-berlin.de}
\and
School of Business and Economics, RWTH Aachen University, Germany
\email{britta.peis@oms.rwth-aachen.de}
} 
\authorrunning{Harks et al.}
\maketitle

\begin{abstract}
We study competitive resource allocation problems in which players distribute their demands integrally on a set of resources subject to player-specific submodular capacity constraints. Each player has to pay for each unit of demand a cost that is a nondecreasing and convex function of the total allocation of that resource. This general model of resource allocation generalizes both singleton congestion games with integer-splittable demands and matroid congestion games with player-specific costs. As our main result, we show that in such general resource allocation problems a pure Nash equilibrium is guaranteed to exist by giving a pseudo-polynomial algorithm computing a pure Nash equilibrium. 
\end{abstract}




\section{Introduction}
In an influential paper, Rosenthal~\cite{Rosenthal73a} introduced \emph{congestion
games}, a class of strategic games, where
a finite set of players competes over a finite set of
resources.  Each player is associated with a set of allowable subsets of resources
and a pure strategy of a player consists of an allowable subset. 
In the context of \emph{network games}, the resources may correspond to edges of a graph and the
allowable subsets correspond to the paths connecting a source and a sink.
The utility of a resource depends only on the number of
players choosing the same resource and each player wants to maximize (minimize)
the utility (cost) of the sum of the resources contained in the selected subset.
Rosenthal proved the existence of a pure Nash equilibrium. Up to day congestion games
have been used as reference models for describing decentralized systems involving the selfish allocation of congestible resources (e.g., selfish
route choices in traffic networks~\cite{Beckmann56,Roughgarden_Book2005,Wardrop52}
and flow control in telecommunication networks~\cite{Johari06,Kelly98,Srikant03})
and for decades they have been a focal point of research
in (algorithmic) game theory, operations research and theoretical computer science. 
 
In the past, the existence of pure Nash equilibria has been analyzed in many variants
of congestion games such as singleton congestion games with player-specific cost
functions (cf.~\cite{Gairing11,Ieong05,Milchtaich96,Milchtaich06}), congestion games with weighted players 
(cf.~\cite{Ackermann09,Anshelevich08,Chen09,Fotakis05,Harks:existence}), nonatomic and atomic splittable congestion games 
(cf.~\cite{Beckmann56,Haurie85,Kelly98,Wardrop52})
and congestion games with player- and resource-specific and variable demands (cf.~\cite{Harks:TARK}).

Most of these previous works can be classified according to
the following two categories: (i) the demand of each player
is unsplittable and must be completely assigned to exactly one subset
of the allowable subsets; (ii) the demand of a player may be fractionally split
over the set of allowable subsets. While these assumptions and the resulting
models are obviously important (and also realistic for some applications),
they do not allow for the requirement that only \emph{integral fractions} of the demand
may be assigned to allowable subsets of resources. This requirement is 
clearly important in many applications, where the demand represents a collection
of indivisible items or tasks that need to be placed on subsets of resources.
Examples include the scheduling of integer-splittable tasks in the context
of load balancing on server farms (cf.~\cite{KrystaSV03}) or in logistics
where a player controls a fleet of vehicles 
and each must be assigned to a single route. 

Although Rosenthal proposed
congestion games with integer-splittable demands as an important
and meaningful model already back in 1973 -- in his first work on  congestion games \cite{Rosenthal73b}
even published prior to his more famous work~\cite{Rosenthal73a} --  not much is known
regarding existence and computability of pure Nash equilibria.
Rosenthal gave an example showing that in general, pure Nash equilibria
need not exist. Dunkel and Schulz~\cite{Dunkel08} strengthened this result showing that the existence of a pure Nash equilibrium in integer-splittable congestion games is NP-complete to decide. Meyers~\cite{Meyers08} proved that in games with linear cost functions, a pure Nash equilibrium is always guaranteed to exist. For singleton strategy spaces and nonnegative and convex cost functions, Tran-Thanh et al.~\cite{Tran11} showed the existence of pure Nash equilibria. They also showed that
pure Nash equilibria need not exist (even for the restricted strategy spaces)
if cost functions are semi-convex.

\subsubsection{Our Results.}
We introduce congestion games on \emph{integral polymatroids}, where each player may fractionally
assign the demand in integral units among the allowable subsets of resources
subject to player-specific submodular capacity constraints. This way, the resulting strategy space for each player forms an integral polymatroid base polyhedron (truncated at the player-specific demand). 
As our main result, we devise an algorithm that computes a pure Nash equilibrium for 
congestion games on integral polymatroids with player-specific nonnegative, nondecreasing and
\emph{strongly semi-convex} cost functions.  The class of strongly semi-convex functions strictly includes convex functions but is included in the class of semi-convex functions (see Section~\ref{sec:strong_semiconvex} for a formal
definition). The runtime of our algorithm is bounded by $n^{\dmax+1} \cdot m^{\dmax} \dmax^{\dmax+1}$,
where $n$ is the number of players, $m$ the number of resources, and
$\dmax$ is an upper bound on the maximum demand. Thus, for constant $\dmax$, the algorithm
is polynomial.

Our existence result generalizes that of Tran-Thanh et al.~\cite{Tran11} 
for singleton congestion games with integer-splittable demands and convex cost functions and that of
Ackermann et al.~\cite{Ackermann09} for  matroid congestion games with unit demands and player-specific nondecreasing costs. For the important class of network design games, where players need to allocate bandwidth in integral units across multiple spanning trees of a player-specific 
communication graph (cf.~\cite{AntonakopoulosCSZ11,ChenRV10,FalkeHarks13}),
our result shows for the first time the existence of pure Nash equilibria provided
that the cost on each edge is a strongly semi-convex function of total bandwidth allocated.
\subsubsection{Techniques.}
Our algorithm for computing pure Nash equilibria maintains data structures for \emph{preliminary demands, strategy spaces, and strategies}
of the players that all are set to zero initially.
Then, it iteratively increases the demand of a player by one unit
and recomputes a preliminary pure Nash equilibrium (with respect
to the current demands) by following a sequence of best response moves
of players.  The key insight to prove the correctness of the algorithm
is based on two invariants that are fulfilled during the course of the algorithm.
As first invariant we can without loss of generality assume that,
whenever the demand of a player is increased by one unit, there
is a best response that assigns the new unit to some resource \emph{without}
changing the allocation of previously assigned demand units. 
As second invariant, we obtain that, after assigning this new unit 
to some resource, only those players that use
the resource with increased load may have an incentive to deviate. Moreover, there is a best response
that has the property that at most a single unit is shifted to some other resource.
Given the above two invariants, we prove that during the
sequence of best response moves a carefully defined vector of \emph{marginal costs}
lexicographically decreases, thus, ensuring that the sequence is finite.

The first invariant follows by reducing an integral polymatroid
to an ordinary matroid (cf. Helgason~\cite{helgason1974aspects}) and the fact that for a matroid,
a minimum independent set $I_d$ with rank $d$ can be extended to a minimum independent set $I_{d+1}$ with rank $d+1$ by adding a single element to $I_d$.
The second invariant, however, is significantly more complex
since a change of the load of one resource results (when using the matroid construction in the spirit of Helgason) in changed element weights of several elements simultaneously. To prove the second invariant we use several exchange and uncrossing arguments that make use of the submodularity of the rank functions and the fact that a non-optimal basis of a matroid can be improved locally. This is the technically most involved part of our paper.

We note that the above invariants have also been used by Tran-Thanh et al.~\cite{Tran11} 
for showing the existence of pure Nash equilibria in singleton integer-splittable
congestion games. For singleton games, however, these invariants follow almost directly. 
The algorithmic idea to incrementally increase the total demand 
by one unit is similar to the (inductive) existence proof of Milchtaich~\cite{Milchtaich96}
for singleton congestion games with
player-specific cost functions (see also Ackerman et al.~\cite{Ackermann09}
for a similar proof for matroid congestion games).
The convergence proof for our algorithm and the above mentioned invariants, however, are
considerably more involved for general integral polymatroids.

Besides providing new existence results for an important and large class of games, the main contribution of this paper is to propose a unified approach to prove the existence of pure Nash equilibria that connects the seemingly unrelated existence results of Milchtaich~\cite{Milchtaich96} and Ackerman et al.~\cite{Ackermann09} on the one hand, and Tran-Thanh et al.~\cite{Tran11} on the other hand.

\section{Preliminaries}
In this section, we introduce polymatroids, strong semi-convexity, and congestion games on integral polymatroids.

\subsubsection{Polymatroids.}
\label{sec:polymatroids}
Let $\N$ denote the set of nonnegative integers and let $R$ be a finite and non-empty set of resources. We write $\N^R$ shorthand for $\N^{|R|}$. Throughout this paper, vectors $\vec x = (x_r)_{r \in R}$ will be denoted with bold face. An integral (set) function $f : 2^R \rightarrow \N$ is \emph{submodular} if $f(U) + f(V) \geq f(U \cup V) + f(U \cap V)$ for all $U,V \in 2^R$. Function $f$ is \emph{monotone} if
$U\subseteq V$ implies $f(U)\le f(V)$, and \emph{normalized} if $f(\emptyset)=0$.
An integral submodular, monotone and normalized function $f : 2^R \rightarrow \N$ is called an \emph{integral polymatroid rank function}. The associated \emph{integral polyhedron} is defined as
\begin{align*}
\P_f &:= \Bigl\{\vec x \in \N^{R} \mid \sum_{r \in U}x_r\leq f(U)\text{ for each }U\subseteq R \Bigr\}. 
\intertext{Given the integral polyhedron $\P_f$ and some integer $d \in \N$ with $d \leq f(R)$, the \emph{$d$-truncated integral polymatroid} $\P_f(d)$ is  defined as}
\P_f(d) &:= \Bigl\{\vec x\in \N^{R} \mid \sum_{r \in U}x_r\leq f(U)\text{ for each }U\subseteq R, \sum_{r\in R} x_r\le d \Bigr\}.
\intertext{The corresponding \emph{integral polymatroid base polyhedron} is}
\B_f(d) &:= \Bigl\{\vec x\in \N^{R} \mid \sum_{r \in U}x_r\leq f(U)\text{ for each }U\subseteq R, \sum_{r\in R} x_r=d \Bigr\}.
\end{align*}

\subsubsection{Strongly Semi-Convex Functions.}
\label{sec:strong_semiconvex}

Recall that a function $c : \N \to \N$ is \emph{convex} if $c(x+1) - c(x) \leq c(x+2) - c(x+1)$ for all $x \in \N$. 
A function $c$ is called \emph{semi-convex} if the function $x \cdot c(x)$ is convex, i.e., $$(x+1)c(x+1) - xc(x) \leq 
(x+2)c(x+2) - (x+1)c(x+1)$$ for all $x \in \N$.

For the main existence result of this paper, we require a property
 of each cost function that we call \emph{strong semi-convexity}, which is weaker than convexity 
but stronger than semi-convexity. Roughly speaking, it states that the 
marginal difference $c(a+x)x - c(a+x-1)(x-1)$ does not decrease as $a$ or $x$ increase. We also introduce a slightly weaker 
notion, termed $u$-truncated strong semi-convexity, where strong semi-convexity is only required for values of $x$ not 
larger than $u$. 
\begin{definition}[Strong Semi-Convexity]
\label{sem-conv}
A function $c: \N \to \N$ is \emph{strongly semi-convex} if
\begin{equation}\label{semi-conv}
c(a+x)x-c(a+x-1)(x-1) \leq c(b+y)y - c(b+y-1)(y-1)
\end{equation}
for all $x,y \in \N$ with $1 \leq x \leq y$ and all $a,b \in \N$ with $a \leq b$.
For an integer $u \geq 1$, $c$ is \emph{$u$-truncated strongly semi-convex}, if the above inequality is only required to be satisfied for all $x,y \in \N$ with $1 \leq x \leq y \leq u$ and all $a,b \in \N$ with $a \leq b$.
\end{definition}
We note that a similar definition is also given in Tran-Thanh et al.~\cite{Tran11}.
We proceed to prove that strong semi-convexity is indeed strictly weaker than convexity. The proof is moved to the appendix.
\begin{proposition}
\label{pro:semi_convex}
Every convex and nondecreasing function $c : \N \to \N$ is also strongly semi-convex, but not vice versa.
\end{proposition}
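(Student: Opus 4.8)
The plan is to treat the two assertions separately. For the forward implication, I would introduce the auxiliary quantity
\[
g(a,x) \;:=\; c(a+x)\,x - c(a+x-1)(x-1),
\]
so that inequality~\eqref{semi-conv} is exactly the statement that $g(a,x)\le g(b,y)$ whenever $a\le b$ and $1\le x\le y$. It therefore suffices to prove that $g$ is nondecreasing in each of its two arguments separately; the general case then follows from the chain $g(a,x)\le g(b,x)\le g(b,y)$, obtained by applying monotonicity in the first argument $b-a$ times and in the second argument $y-x$ times.

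For monotonicity in $a$, one computes
\[
g(a+1,x) - g(a,x) \;=\; x\bigl[c(a+x+1)-c(a+x)\bigr] - (x-1)\bigl[c(a+x)-c(a+x-1)\bigr],
\]
and observes that convexity of $c$ gives $c(a+x+1)-c(a+x)\ge c(a+x)-c(a+x-1)\ge 0$, the last inequality because $c$ is nondecreasing; since $x\ge x-1\ge 0$, the whole expression is nonnegative. For monotonicity in $x$, the key remark is that $g(a,x)=h(x)-h(x-1)$ for the function $h(k):=k\cdot c(a+k)$, so that $g(a,x+1)-g(a,x)$ is the second difference $h(x+1)-2h(x)+h(x-1)$; expanding and collecting terms yields
\[
g(a,x+1) - g(a,x) \;=\; (x+1)\bigl[c(a+x+1)-c(a+x)\bigr] - (x-1)\bigl[c(a+x)-c(a+x-1)\bigr],
\]
which is nonnegative by exactly the same argument as before. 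I expect the only real work in this direction to be the bookkeeping in these two second-difference identities, i.e.\ rewriting everything in terms of the increments $c(t+1)-c(t)$ so that convexity and monotonicity of $c$ can be applied term by term; everything else is routine.

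For the converse (the ``not vice versa'' part), I would exhibit the step function $c:\N\to\N$ given by $c(0)=0$ and $c(n)=1$ for all $n\ge 1$. This $c$ is nondecreasing but not convex, since $c(1)-c(0)=1>0=c(2)-c(1)$. Yet $c$ is strongly semi-convex: for $x=1$ we have $g(a,1)=c(a+1)=1$, and for $x\ge 2$ both $a+x$ and $a+x-1$ are at least $1$, so $g(a,x)=x-(x-1)=1$; hence $g\equiv 1$ and inequality~\eqref{semi-conv} holds with equality for all admissible $a,b,x,y$. This would show that $c$ is strongly semi-convex but not convex, completing the proof.
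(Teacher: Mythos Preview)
Your argument is correct in both parts, but it proceeds differently from the paper.

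For the forward implication, the paper does not decompose via monotonicity in each argument; instead it starts from the convexity inequality $c(a+x)-c(a+x-1)\le c(b+y)-c(b+y-1)$, multiplies the left side by $x$ and the right side by $y$ (valid because both differences are nonnegative and $x\le y$), and then adds the monotonicity inequality $c(a+x-1)\le c(b+y-1)$. Your route via the two second-difference identities
\[
g(a{+}1,x)-g(a,x)=x\bigl[c(a{+}x{+}1)-c(a{+}x)\bigr]-(x{-}1)\bigl[c(a{+}x)-c(a{+}x{-}1)\bigr],
\]
\[
g(a,x{+}1)-g(a,x)=(x{+}1)\bigl[c(a{+}x{+}1)-c(a{+}x)\bigr]-(x{-}1)\bigl[c(a{+}x)-c(a{+}x{-}1)\bigr]
\]
is equally valid and arguably cleaner: it isolates exactly where convexity and monotonicity each enter, and the chain $g(a,x)\le g(b,x)\le g(b,y)$ makes the induction explicit.

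For the ``not vice versa'' part, the paper exhibits a different function, namely $c(x)=x$ for $x\le 2$ and $c(x)=x-\tfrac14$ for $x\ge 3$, and then spends some effort verifying strong semi-convexity by a case analysis. Your step function $c(0)=0$, $c(n)=1$ for $n\ge 1$ is a genuinely simpler choice: the computation $g\equiv 1$ is immediate, and---unlike the paper's example---your function actually takes values in $\N$, as the hypothesis $c:\N\to\N$ demands.
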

\begin{remark}
A function is $1$-truncated strongly semi-convex if and only if it is nondecreasing.
\end{remark}


\subsubsection{Congestion Games on Integral Polymatroids.}
\label{sec:game}
In a congestion game on integral polymatroids, 
there is a non-empty and finite set $N$ of players and a non-empty and finite set
$R$ of resources
each endowed with a
 player-specific cost function $c_{i,r} : \N \to \N$, $r \in R, i \in N$. Each player~$i$ is associated with a demand $d_i \in \N$, $d_i \geq 1$ and an integral polymatroid rank  function $f^{(i)} : 2^R \to \N$ that together define a $d_i$-truncated integral polymatroid $\P_{f^{(i)}}(d_i)$ with base polyhedron $\B_{f^{(i)}}(d_i)$ on the set of resources. A strategy of
player~$i \in N$ is to choose a vector $\vec x_i  = (x_{i,r})_{r \in R} \in \B_{f^{(i)}}(d_i)$, i.e., player $i$ chooses an integral resource consumption $x_{i,r} \in \N$ for each resource $r$
such that the demand $d_i$ is exactly distributed among the resources and for each $U \subseteq R$ not more than $f^{(i)}(U)$ units of demand are distributed to the resources contained in $U$.
Using the notation $\vec x_i=(x_{i,r})_{r \in R}$, the set $X_i$ of feasible strategies of player~$i$ is
defined as
\begin{align*}
X_i = \B_{f^{(i)}}(d_i)
= \Bigl\{\vec x_i \in \N^{R} \mid \sum_{r\in U} x_{i,r} \le f^{(i)}(U) \text{ for each } U \subseteq 
R, \sum_{r\in R} x_{i,r}=d_i \Bigr\}.
\end{align*}
The Cartesian product $X = \bigtimes_{i \in N} X_i$ of the players' sets of feasible
strategies is the joint strategy space. An
element $\vec x = (\vec x_i)_{i \in N} \in X$ is a strategy profile. For a resource $r$, and a strategy profile $\vec x 
\in X$, we write $x_r = \sum_{i \in N}
x_{i,r}$. The private cost of
player~$i$ under strategy profile $\vec x \in X$ is defined as 
$ \pi_i(\vec x) = \sum_{r \in R}
c_{i,r}(x_r) \, x_{i,r}.$
In the remainder of the paper, we will compactly
represent the strategic game by the tuple 
$ G=(N,X,(d_i)_{i\in N},(c_{i,r})_{i\in N,r\in R}).$

We use standard game theory notation. For a player $i \in N$ and a strategy profile $\vec x \in X$,
we write $\vec x$ as $(\vec x_i,\vec x_{-i})$. A \emph{best response} of player~$i$ to $\vec x_{-i}$ is a strategy $\vec 
x_i \in X_i$ with $\pi_i(\vec x_i, \vec{x}_{-i}) \leq \pi_i(\vec y_i, \vec{x}_{-i})$ for all $\vec y_i \in X_i$.
A pure Nash equilibrium is a strategy profile $\vec x \in X$ such that for each player~$i$ the strategy $\vec x_i$ is a 
best response to $\vec x_{-i}$.

Throughout this paper, we assume that the player-specific cost functions of player~$i$ on resource $r$ are $u_{i,r}$-truncated 
strongly semi-convex, where $u_{i,r} = f^{(i)}(\{r\})$. Note that $u_{i,r}$ is a natural upper bound on the units of demand 
player~$i$ can allocate to resource $r$ in any strategy $\vec x_i \in X_i$.

\begin{assumption}
\label{ass:cost}
For all $i\in N, r\in R$, the cost function $c_{i,r} : \N \to \N$ is nonnegative, nondecreasing and $u_{i,r}$-truncated strongly semi-convex, where $u_{i,r} = f^{(i)}(\{r\})$.
\end{assumption}

\subsection{Examples}\label{apps:examples}

We proceed to illustrate that we obtain the well known classes of integer-splittable singleton congestion games and matroid congestion games as special cases of congestion games on integer polymatroids.

\begin{example}[Singleton integer-splittable congestion games]
For the special case that for each player~$i$, there is a player-specific subset $R_i \subseteq R$ of resources such that $f^{(i)}(\{r\}) = d_i$, if $r \in R_i$, and $f^{(i)}(\{r\}) = 0$, otherwise, we obtain integer-splittable singleton congestion games previously studied by Tran-Thanh et al.~\cite{Tran11}. While they consider the special case of convex and \emph{player-independent} cost functions, our general existence result implies existence of a pure Nash equilibrium even for \emph{player-specific} and strongly semi-convex cost functions. 
\end{example}

\begin{example}[Matroid congestion games with player-specific costs]
For the special case, that for each player~$i$, $f^{(i)}$ is the rank function of a player-specific matroid defined on $R$, and $d_i=f^{(i)}(R)$, we obtain ordinary matroid congestion games with player-specific costs and unit demands studied by Ackermann et al.~\cite{Ackermann09} as a special case.

Note that the rank function~$\text{rk}$ of a matroid is always subcardinal, i.e., $\text{rk}(U) \leq |U|$ for all $U \subseteq R$. Thus, we obtain in particular that $\text{rk}(\{r\}) \leq 1$ for all $r \in R$. This implies that our existence result continues to hold if we only require that the player-specific cost functions are $1$-truncated strongly semi-convex, which is equivalent to requiring that cost functions are nondecreasing as in \cite{Ackermann09}. Like this, we obtain the existence result of \cite{Ackermann09} as a special case of our existence result for congestion games on integer polymatroids.
As a strict generalization, our model includes the case in which players
have a demand $d_i\in \N$ that can be distributed in integer units over
bases (or even arbitrary independent sets) of a given player-specific matroid. 
A prominent application arises
in network design (cf.~\cite{AntonakopoulosCSZ11,ChenRV10,FalkeHarks13}), where a player needs to allocate bandwidth along several spanning trees and the cost function for installing enough capacity on an edge is a convex function of the total  bandwidth allocated.
\end{example}

\section{Equilibrium Existence}
In this section, we give an algorithm that computes
a pure Nash equilibrium for congestion games on integral polymatroids.
Our algorithm relies on two key sensitivity properties of optimal solutions minimizing
a linear function over an integral polymatroid base polyhedron (see Lemma~\ref{lem:dem}
and Lemma~\ref{lem:load} below).
After reducing the usual reduction of integer polymatroids to ordinary matroids (cf.\ \cite{helgason1974aspects}),
Lemma~\ref{lem:dem} follows more or less directly from the respective property for matroids.
The proof of Lemma~\ref{lem:load} is considerably more involved and relies heavily on uncrossing arguments.
The two lemmata will be proven formally in Section~\ref{sec:supermatroids}.

\subsubsection{Key Sensitivity Results.}
For two vectors $\vec x_i, \vec y_i\in\N^R$, we denote
their Hamming distance by $H(\vec x_i, \vec y_i):=\sum_{r \in R}|x_{i,r}-y_{i,r}|$.
Lemma~\ref{lem:dem}
 shows that whenever $\vec x_i$ 
minimizes the cost of player $i$ over the base polyhedron $\B_{f^{(i)}}(d_i)$, then 
we only need to increase $x_{i,r}$ for some $r\in R$ by one unit
to obtain a $\vec y_i$ minimizing the player's cost over the base polyhedron $B_{f^{(i)}}(d_i+1)$.

\begin{lemma}[Demand Increase]\label{lem:dem}
Let $\vec x_i \in \B_{f^{(i)}}(d_i)$ be a best response of player~$i$ to $\vec x_{-i} \in X_{-i}$. Then there exists a best 
response $\vec y_i \in \B_{f^{(i)}}(d_i+1)$ to $\vec x_{-i}$ such that
$H(\vec x_i, \vec y_i)=1$.
\end{lemma}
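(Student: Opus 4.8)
The plan is to translate the statement into the language of ordinary matroids via the Helgason reduction and then invoke the well-known greedy/exchange structure of matroids. First I would recall the reduction: given the integral polymatroid rank function $f^{(i)}$ on $R$, one constructs a matroid $\M^{(i)}$ on the ground set $E = \bigcup_{r \in R} E_r$, where each resource $r$ is blown up into a set $E_r$ of $f^{(i)}(\{r\})$ parallel copies, and a set $I \subseteq E$ is independent in $\M^{(i)}$ iff $|I \cap E_U| \le f^{(i)}(U)$ for every $U \subseteq R$, where $E_U = \bigcup_{r \in U} E_r$. Under this reduction, an integral point $\vec x_i \in \P_{f^{(i)}}$ corresponds to an independent set of $\M^{(i)}$ that takes exactly $x_{i,r}$ copies from each $E_r$, and membership in $\B_{f^{(i)}}(d_i)$ corresponds to independent sets of size exactly $d_i$. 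Crucially, since $\vec x_{-i}$ is fixed, the cost contribution of placing one more unit of player $i$'s demand on resource $r$ when player $i$ currently has $x_{i,r}$ units there is the marginal cost $c_{i,r}(x_r + 1)(x_{i,r}+1) - c_{i,r}(x_r)x_{i,r}$; by assigning to the $k$-th copy in $E_r$ the weight $w(e) = c_{i,r}(x_{-i,r}+k)\,k - c_{i,r}(x_{-i,r}+k-1)(k-1)$, the total weight of an independent set equals the private cost $\pi_i$ of the corresponding strategy (here $x_{-i,r} = \sum_{j\ne i} x_{j,r}$). The $u_{i,r}$-truncated strong semi-convexity (Assumption \ref{ass:cost}) is exactly the hypothesis needed to guarantee that within each $E_r$ these copy-weights are nondecreasing in $k$, so that a minimum-weight independent set of a given size never "skips" a cheaper copy — i.e., one may assume it uses copies $1, 2, \dots, x_{i,r}$ of $E_r$ in order.

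With this setup, $\vec x_i$ being a best response of player $i$ corresponds to $I_{d_i}$ being a minimum-weight independent set of $\M^{(i)}$ of size $d_i$, and we must produce a minimum-weight independent set $I_{d_i+1}$ of size $d_i+1$ with $|I_{d_i+1} \triangle I_{d_i}| = 1$, i.e., $I_{d_i+1} = I_{d_i} \cup \{e\}$ for a single new element $e$. This is the standard fact that the greedy algorithm for matroids is "nested": minimum-weight bases of the uniform-matroid-truncations of $\M^{(i)}$ at successive ranks can be chosen to form a chain $\emptyset = I_0 \subseteq I_1 \subseteq \dots$. The clean way to see it is via the matroid exchange axiom: take any minimum-weight independent set $J$ of size $d_i+1$; since $|I_{d_i}| < |J|$, there is $e \in J \setminus I_{d_i}$ with $I_{d_i} \cup \{e\}$ independent. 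A short exchange argument shows $I_{d_i} \cup \{e\}$ is itself minimum-weight of size $d_i+1$: if not, swapping elements between $I_{d_i}\cup\{e\}$ and $J$ along a symmetric-difference argument, combined with the optimality of $I_{d_i}$ among size-$d_i$ sets (which forbids $I_{d_i}$ from containing any element replaceable by something strictly cheaper), yields a contradiction. Finally, translating $I_{d_i+1} = I_{d_i} \cup \{e\}$ back through the reduction: the new element $e$ lies in some $E_r$, and by the "no skipping" property of optimal solutions we may assume it is the $(x_{i,r}+1)$-st copy, so $\vec y_i$ agrees with $\vec x_i$ except $y_{i,r} = x_{i,r}+1$, giving $H(\vec x_i, \vec y_i) = 1$.

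The main obstacle — though it is a mild one compared to Lemma \ref{lem:load} — is making the weight assignment on the parallel copies legitimate, namely verifying that the copy-weights within each $E_r$ are nondecreasing so that minimum-weight independent sets can be taken to respect the copy ordering. This is precisely where $u_{i,r}$-truncated strong semi-convexity enters: inequality \eqref{semi-conv} with $a = b = x_{-i,r}$ and consecutive values $x = k-1,\, y = k$ (both $\le u_{i,r}$) gives $w(e_{k-1}) \le w(e_k)$ for the copies of $E_r$. One must also be slightly careful that the rank constraint $d_i + 1 \le f^{(i)}(R)$ holds so that $\B_{f^{(i)}}(d_i+1)$ is nonempty, which is implicit in the lemma's statement. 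Everything else is the standard greedy-matroid machinery, so after these checks the proof is short.
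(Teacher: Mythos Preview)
Your proposal is correct and follows essentially the same route as the paper. The paper carries out the identical Helgason-style reduction (phrased as ideals in a poset rather than ``parallel copies'', but this is cosmetic), assigns the same marginal-cost weights $w(r_t)=t\,c_{i,r}(a_r+t)-(t-1)\,c_{i,r}(a_r+t-1)$, invokes strong semi-convexity exactly as you do to make these weights \emph{admissible} (monotone along each chain), and then appeals to Faigle's ordered greedy algorithm to conclude that a minimum-weight ideal of size $k$ extends to one of size $k+1$ by adding a single element (their Proposition~\ref{l.demand-increase}); your direct augmentation argument---pick $e\in J\setminus I_{d_i}$ with $I_{d_i}+e$ independent and observe $w(I_{d_i}+e)\le w(J)$ because $w(I_{d_i})\le w(J-e)$---is the standard proof of the same matroid fact and is arguably cleaner than citing the greedy machinery.
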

The second result shows that if some other player $j\neq i$
increases the $j$-load on some resource $r$ that is also used by player $i$
with at least one unit, then
player $i$ can simply shift one unit from $r$ to some  $s\in R$
in order to retain minimal costs. 
\begin{lemma}[Load Increase]\label{lem:load}
Let $\vec x_i \in \B_{f^{(i)}}(d_i)$ be a best response of player~$i$ to $\vec x_{-i}\in X_{-i}$ and for each resource $r$ let $a_r = \sum_{j \neq i} x_{j,r}$ be the induced allocation.
If for a resource $r$, the value $a_{r}$ is increase by $1$, then
there exists a best response $\vec y_i\in \B_{f^{(i)}}(d_i)$ towards the new profile
with $H(\vec x_i, \vec y_i) \in \{0,2\}$.
\end{lemma}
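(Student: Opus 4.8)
The plan is to use the Helgason-style reduction of the integral polymatroid $\P_{f^{(i)}}(d_i)$ to an ordinary matroid $M_i$ on a ground set $E = \{(r,k) : r \in R,\ 1 \le k \le f^{(i)}(\{r\})\}$, where a set is independent iff for every $U \subseteq R$ the number of chosen elements with first coordinate in $U$ is at most $f^{(i)}(U)$, and a base of $M_i$ truncated at rank $d_i$ corresponds to a vector $\vec x_i \in \B_{f^{(i)}}(d_i)$ (element $(r,k)$ carries weight $c_{i,r}(a_r + k)$, so that the total weight of a base equals $\pi_i(\vec x_i,\vec x_{-i})$; here the $u_{i,r}$-truncated strong semi-convexity of $c_{i,r}$ is exactly what guarantees that stacking the copies of resource $r$ in increasing index order is cost-optimal, i.e. that a minimum-weight base uses a lower-indexed copy of $r$ before a higher-indexed one). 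Under this correspondence, $\vec x_i$ being a best response means the associated set $B_0 \subseteq E$ of $|B_0| = d_i$ elements is a minimum-weight independent set of size $d_i$. Increasing $a_r$ by one raises the weight of each copy $(r,k)$ from $c_{i,r}(a_r+k)$ to $c_{i,r}(a_r+1+k)$, i.e. every copy of $r$ gets reindexed by one; crucially, no copy of any resource other than $r$ changes weight, and the matroid $M_i$ itself is unchanged.

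The first step is to reduce to the case where $B_0$ uses a \emph{maximal} prefix of copies of $r$, say copies $(r,1),\dots,(r,\ell)$ with $\ell = x_{i,r} \le u_{i,r}$; then after the load increase the new weights of these used copies become $c_{i,r}(a_r+2),\dots,c_{i,r}(a_r+1+\ell)$, while the first unused copy $(r,\ell+1)$, if it exists, has new weight $c_{i,r}(a_r+2+\ell)$ — note this is only well-defined when $\ell+1 \le u_{i,r}$, i.e. exactly when player $i$ could shift more load onto $r$, which is not the relevant move. We want to show that the new minimum-weight independent set of size $d_i$ can be obtained from $B_0$ either without change or by swapping out one copy of $r$ and swapping in one copy $(s,\cdot)$ of some other resource $s$ — equivalently, by a single exchange step along one circuit of $M_i$. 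To see this we invoke the local-optimality characterization of minimum-weight bases: an independent set of size $d_i$ is minimum iff no single swap $B \setminus \{e\} \cup \{e'\}$ (with $B\setminus\{e\}\cup\{e'\}$ independent of size $d_i$) strictly decreases the weight. Applying this to $B_0$ under the \emph{new} weights: the only elements whose weight changed are copies of $r$ in $B_0$, and each such change was an increase; so the only profitable swaps remove a copy of $r$ and add some $e'$ not in $B_0$. If no such swap is profitable, $B_0$ is still optimal and $H(\vec x_i,\vec y_i)=0$. Otherwise we perform one such swap to get $B_1$; it remains to argue $B_1$ is already globally optimal under the new weights, which is where an uncrossing argument on the tight constraints $\{U : \sum_{t\in U} x_{i,t} = f^{(i)}(U)\}$ is needed to control how the single removed unit of $r$ and the single added unit of $s$ interact with the polymatroid structure, and to rule out needing a second swap.

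The main obstacle is precisely this last point: after the single beneficial swap, a priori the matroid might admit a \emph{second} profitable swap, which would give Hamming distance $4$. The key is to show that any second profitable swap under the new weights would already have been profitable under the old weights (since only copies of $r$ changed weight, and after the first swap $B_1$ contains one fewer copy of $r$, so a further weight decrease involving the still-present copies of $r$ would have been available at $B_0$ too), contradicting optimality of $B_0 = $ the old best response — \emph{unless} the second swap again involves a copy of $r$ whose weight increased. Handling that residual case is where the submodular uncrossing comes in: one shows that if two copies of $r$ in $B_0$ could each be profitably swapped out after the increase, then by the exchange axioms one of those swaps was already available before the increase (using that the weight of the added element did not change and that the circuits can be uncrossed via submodularity of $f^{(i)}$ restricted to the tight sets). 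I expect this uncrossing step — carefully matching the removed and inserted elements across the two candidate swaps and using monotonicity plus $u_{i,r}$-truncated strong semi-convexity to compare weights of same-resource copies at shifted indices — to be the technically delicate core of the argument, consistent with the remark in the introduction that the second invariant is "the technically most involved part."
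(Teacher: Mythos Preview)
Your reduction to the Helgason matroid is the right framework and matches the paper, but the weight function you put on the copies is wrong. You assign weight $c_{i,r}(a_r+k)$ to element $(r,k)$ and claim the base weight equals $\pi_i(\vec x_i,\vec x_{-i})$; it does not. If player~$i$ puts $\ell$ units on $r$, your base weight on those copies is $\sum_{k=1}^{\ell}c_{i,r}(a_r+k)$, whereas the actual cost contribution is $\ell\cdot c_{i,r}(a_r+\ell)$. The paper instead gives element $r_t$ the \emph{marginal} weight
\[
w(r_t)=t\,c_{i,r}(a_r+t)-(t-1)\,c_{i,r}(a_r+t-1),
\]
which telescopes correctly to $\ell\,c_{i,r}(a_r+\ell)$. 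Strong semi-convexity is exactly the condition that makes these marginal weights nondecreasing along the chain (admissible in the sense of Faigle's ordered greedy), and this is where the hypothesis enters; with your weights, mere monotonicity of $c_{i,r}$ would already give admissibility, which should have been a red flag.

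Beyond the weight function, the ``one swap suffices'' part of your outline has a real gap. You argue that a second profitable swap $B_1\to B_1-p+h$ under the new weights ``would already have been profitable under the old weights'' at $B_0$. That conflates weights and feasibility: the swap $-p+h$ is feasible from $B_1=B_0-e+g$, but there is no reason $B_0-p+h$ is in $\mathcal F(d_i)$ at all. The paper handles precisely this by (i) choosing the \emph{maximally improving} first swap $\{e,g\}$, (ii) proving two structural lemmata (that $g\not\preceq h$, and that at least one of $F-p+h$ or $F-e+h$ lies in $\mathcal F(k)$), and then (iii) running a careful case analysis using tightness of the constraints at $F$ and a submodular uncrossing on the violated sets $S,T$ to reach a contradiction with $F-e+g-p+h\in\mathcal F(k)$. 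Your sketch gestures at uncrossing but skips the feasibility transfer, which is the actual content of the argument.
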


\subsubsection{The Algorithm.}
Both sensitivity results are used as the main
building blocks for Algorithm~\ref{alg:greedy}
that computes a pure Nash equilibrium
for congestion games on integral polymatroids.
Algorithm~\ref{alg:greedy}
maintains \emph{preliminary demands, strategy spaces, and strategies} of the players
denoted by $\bar d_i\leq d_i$, $\bar X_i=X_i(\bar d_i)$, and $\vec x_i\in \bar X_i$, respectively. Initially,  $\bar{d}_i$ is set to zero for all $i\in N$ and the strategy profile, where the strategy of each player equals the zero vector is a pure Nash equilibrium for this game in which the demand of each player is zero.

Then, in each round, for some player~$i$ the demand is increased from 
$\bar d_i$ to $\bar d_i+1$, and a best response $\vec y_i\in X(\bar d_i+1)$ with $H(\vec x_i, \vec y_i)=1$ is computed, see Line~\ref{it:comp_demand} in Algorithm~\ref{alg:greedy}. By Lemma~\ref{lem:dem},
such a best response always exists. In effect, the load on exactly one resource $r$
increases and only those players~$j$ with $x_{j,r} > 0$ on this
resource can potentially decrease their private cost by a deviation.
By Lemma~\ref{lem:load}, a best response of such players  consists w.l.o.g.
of moving a single unit from this resource to another resource, see Line~\ref{it:choose_yi} of Algorithm~\ref{alg:greedy}.
As a consequence, during the while-loop (Lines~\ref{it:if}-\ref{it:endif}),
only one additional unit (compared to the previous iteration)
is moved preserving the invariant that only players using a resource
to which this additional unit is assigned may have an incentive to profitably deviate.
Thus, if the while-loop is left, the current strategy profile~$\vec x$
is a pure Nash equilibrium for the reduced game $\bar{G} = (N, \bar X, \bar d,(c_{i,r})_{i \in N, r \in R})$.
\IncMargin{2.5em}
\begin{algorithm}[0.8\textwidth]
 \caption{Compute PNE}
 \label{alg:greedy}
 \Indm\Indmm
    \KwIn{$G=(N,X,(d_i)_{i\in N},(c_{i,r})_{i\in N,r\in R})$}
  \KwOut{pure Nash equilibrium $\vec x$}
  \Indp\Indpp
$\bar d_i \leftarrow 0, \bar X_i\leftarrow X_i(0)$ and $\vec x_i\leftarrow \vec 0$ for all $i\in N$\;
		\For{$k=1,\dots, \sum_{i\in N}d_i$}{	
			Choose $i\in N$ with $\bar d_i<d_i$\;\label{it:choose_player}
			$\bar d_i\leftarrow \bar d_i+1$; $\bar X_i\leftarrow X_i(\bar d_i)$\;\label{it:set_increase_demand}
	Choose a best response $y_i \in \bar{X}_i$ with $H(y_i,x_i) = 1$\;\label{it:comp_demand}
$\vec x_i \leftarrow  \vec y_i$\; \label{it:increase_demand}
\While{$\exists i\in N$ who can improve in $\bar G = (N, \bar X, \bar d,(c_{i,r})_{i \in N, r \in R})$
\label{it:if}}
 {Compute a best response $\vec y_i \in \bar X_i$ with $H(\vec y_i, \vec x_i)=2$\;  \label{it:choose_yi}
$\vec x_i \leftarrow  \vec y_i$\;}
\label{it:endif}}
Return $\vec x$\;
\end{algorithm}

Now we are ready to prove the main existence result.
\begin{theorem}
\label{thm:main}
Congestion games on integral polymatroids
with player-specific nonnegative, nondecreasing, and strongly semi-convex cost functions possess a pure Nash equilibrium.
\end{theorem}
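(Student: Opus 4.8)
The plan is to show that Algorithm~\ref{alg:greedy} is well defined, terminates, and outputs a pure Nash equilibrium of $G$. The backbone is an induction over the outer \texttt{for}-loop establishing the invariant: at the end of each iteration the current profile $\vec{x}$ is a pure Nash equilibrium of the reduced game $\bar G=(N,\bar X,\bar d,(c_{i,r}))$. Initially all $\bar d_i=0$ and the all-zero profile is trivially an equilibrium. In an outer iteration we pick $i$ with $\bar d_i<d_i$, raise $\bar d_i$ by one, and—by Lemma~\ref{lem:dem}—replace $\vec x_i$ by a best response $\vec y_i\in\B_{f^{(i)}}(\bar d_i+1)$ with $H(\vec x_i,\vec y_i)=1$, so that the total load of exactly one resource $r$ goes up by one. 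Because each $c_{j,r}$ is nondecreasing and a player's private cost $\pi_j(\vec x)=\sum_{r} c_{j,r}(x_r)x_{j,r}$ depends only on the loads of the resources it actually uses, every player $j$ with $x_{j,r}=0$ still plays a best response; hence only players with $x_{j,r}>0$ can possibly improve, which is exactly the scenario covered by Lemma~\ref{lem:load}.

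I would then analyze the \texttt{while}-loop keeping the (stronger) invariant that there is a single ``overloaded'' resource, i.e. $\vec x$ agrees with some equilibrium of $\bar G$ except that one surplus unit sits on that resource, and that only players using it can improve. Whenever some player $i$ can improve, its strategy is not a best response, so by Lemma~\ref{lem:load} there is a best response $\vec y_i$ with $H(\vec x_i,\vec y_i)=2$—a single unit shifted from the overloaded resource $r$ to another resource $s$. Adopting $\vec y_i$ moves the surplus unit from $r$ to $s$, and a symmetric argument re-establishes the invariant. Therefore, when the loop exits, $\vec x$ is an equilibrium of $\bar G$, and after $\sum_i d_i$ outer iterations $\bar G=G$ and $\vec x$ is the claimed equilibrium.

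It remains to show each \texttt{while}-loop terminates, via a lexicographic potential built from marginal costs. Thinking of player $i$ as stacking its $x_{i,r}$ units on $r$ one at a time, adding the $k$-th unit on top of an ``others''-load of $a=x_r-x_{i,r}$ contributes the marginal cost $\mu := c_{i,r}(a+k)\,k - c_{i,r}(a+k-1)(k-1)$, and $\pi_i(\vec x)$ telescopes to the sum of these marginals; strong semi-convexity (Definition~\ref{sem-conv}) is precisely the assertion that $\mu$ is nondecreasing in both $a$ and $k$. Let $\Phi(\vec x)$ be the vector of all marginal costs $\mu$ over all players, resources and units, sorted in nonincreasing order and compared lexicographically. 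A single best-response move shifting $i$'s top unit from $r$ to $s$ deletes the marginal $m_r^i$ of that unit, leaves all other marginals of $i$ and all marginals on resources outside $\{r,s\}$ untouched, weakly lowers every other player's marginals on $r$ (their ``$a$'' drops), weakly raises every other player's marginals on $s$ (their ``$a$'' rises), and inserts the marginal $m_s^i<m_r^i$ of $i$'s new top unit on $s$ (the strict inequality being profitability). Since the \texttt{while}-loop ranges over the fixed finite set $\prod_i \bar X_i$, a strictly lexicographically decreasing $\Phi$ rules out revisiting any profile and forces termination; a finer count of reachable profiles yields the stated $n^{\dmax+1} m^{\dmax}\dmax^{\dmax+1}$ bound.

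The main obstacle is twofold. First, Lemma~\ref{lem:load} is the technical heart: after the standard reduction of integral polymatroids to ordinary matroids, lowering a single resource's load changes the weights of several ground-set elements simultaneously, so the single-unit-shift guarantee requires the uncrossing and exchange arguments deferred to Section~\ref{sec:supermatroids}. Second, within the potential argument one must verify that every marginal the move \emph{increases}—those of the other players on the target resource $s$, together with $m_s^i$—is strictly below the deleted marginal $m_r^i$, so that the count of marginals of value at least $m_r^i$ strictly drops and $\Phi$ decreases lexicographically; establishing this domination is where one has to combine the best-response optimality of the other players' current strategies (no profitable single-unit shift off $s$ was available before the move) with the monotonicity supplied by strong semi-convexity, and it is exactly the point at which the invariant ``only users of the overloaded resource can deviate'' is needed.
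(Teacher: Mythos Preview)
Your overall architecture matches the paper: induction on the outer loop, Lemma~\ref{lem:dem} to add one unit, Lemma~\ref{lem:load} to justify single-unit shifts in the while-loop, and a lexicographic potential built from marginal costs to force termination. The gap is in the potential argument.

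With your definition of the marginal of the $k$-th unit of player $i$ on resource $r$ as $c_{i,r}(a+k)k-c_{i,r}(a+k-1)(k-1)$ where $a=x_r-x_{i,r}$, a best-response move of player $i$ from $r$ to $s$ genuinely \emph{raises} every other player $j$'s marginals on $s$ (their $a$ on $s$ increases by one). For $\Phi$ to drop lexicographically you would need each such raised value to stay strictly below the deleted marginal $m_r^i$. You propose to get this from ``best-response optimality of the other players' current strategies'', but that premise is not available: after the first pass through the while-loop several players may simultaneously fail to be at best response, and in any case player $j$'s marginal on $s$ is computed with $j$'s own cost function $c_{j,s}$, which in a player-specific game bears no relation to $c_{i,r}$. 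A player $j$ that uses $s$ but not $r$ can have marginals on $s$ that dwarf $m_r^i$, and your move strictly increases all of them with nothing to compensate.

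The paper sidesteps this entirely by a different definition of the marginal vector: it tracks the unique ``overloaded'' resource $r_l$ and assigns to every unit on a \emph{non}-overloaded resource the marginal computed at load $x_r^l+1$ rather than $x_r^l$ (see \eqref{def:delta}). With this offset, when the surplus unit moves from $r_{l-1}$ to $r_l$, every non-deviating player's $\Delta$-values are \emph{exactly unchanged}: on $r_{l-1}$ the load drops by one but the case switches from ``overloaded'' to ``$+1$'', and on $r_l$ the load rises by one but the case switches the other way. Only the deviating player's $\Delta$-values move, and those are controlled by the strict improvement and by monotonicity of $c_{i,r}$. That is the missing idea; once you adopt it, no appeal to other players' optimality is needed and the player-specific nature of the costs causes no trouble.
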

\begin{proof}
We prove by induction on the total demand $d = \sum_{i\in N}d_i$ of the input game $G = (N, X, (d_i)_{i \in N}, 
(c_{i,r})_{i \in N, r \in R})$ that Algorithm~\ref{alg:greedy}
computes a pure Nash equilibrium of $G$.

For $d = 0$, this is trivial.
Suppose that the algorithm works correctly for games with total demand $d-1$ for some $d \geq 1$ and consider a game $G$ with total demand $d$.
Let us assume that in Line~\ref{it:choose_player}, the algorithm always chooses a player with minimum index. Consider the game $G' = (N,X,(d'_i)_{i\in N}, (c_{i,r})_{i \in N, r \in R})$ that differs from $G$ only in the fact that the demand of the last player~$n$ is reduced by one, i.e. $d'_i = d_i$ for all $i<n$ and $d_n' = d_n - 1$. Then, when running the algorithm with $G'$ as input, the $d-1$ iterations (of the for-loop) are equal to the first $d-1$ iterations when running the algorithm with $G$ as input. Thus, with $G$ as input, we may assume that after the first $d-1$ iterations, the preliminary strategy profile that we denote by $\vec x'$ is a pure Nash equilibrium of $G'$.

We analyze the final iteration $k=d$ of the algorithm in which the demand of player~$n$ is increased by $1$ (see Line~\ref{it:set_increase_demand}).
In Line~\ref{it:comp_demand}, a best reply $\vec y_n$ with $H(\vec x_n,\vec y_n) = 1$
is computed which exists by Lemma~\ref{lem:dem}. Then, as long as there is a player~$i$ that can improve unilaterally, in Line~\ref{it:choose_yi}, a best response $\vec y_i$ with $H(\vec y_i, \vec x_i) = 2$ is computed which exists by Lemma~\ref{lem:load}.

It remains to show that the while-loop in Lines~\ref{it:if}--\ref{it:endif} terminates.
To prove this, we give each unit of demand of each player~$i \in N$ an identity
denoted by $i_j, j=1,\dots,d_i$. For a strategy profile $\vec x$, we define
$r(i_j,\vec x)\in R$ to be the resource to which  unit $i_j$ is assigned in strategy profile $\vec x$.
Let $\vec x^l$ be the strategy profile after Line~\ref{it:choose_yi} of the algorithm has been executed the $l$-th time, where we use the convention that $\vec x^0$ denotes the preliminary strategy profile when entering the while-loop. As we chose in Line~\ref{it:comp_demand} a strategy of player~$n$ with Hamming distance one, there is a unique resource $r_0$ such that $x_{r_0}^0 = x'_{r_0} + 1$ and $x_{r}^0 = x_{r}'$ for all $r \in R \setminus \{r_0\}$. Furthermore, because we choose in Line~\ref{it:choose_yi}
 a best response with Hamming distance two, a simple inductive claim shows that after each iteration~$l$ of the while-loop, there is a unique resource $r_l \in R$ such that $x_{r_l}^l = x_{r_l}' + 1$ and $x_{r}^l = x_{r}'$ for all $r \in R \setminus \{r_l\}$.

For any $\vec x^{l}$ during the course of the algorithm, we define the
\emph{marginal cost} of unit $i_j$
under strategy profile $\vec x^{l}$ as
\begin{align}\label{def:delta}
\Delta_{i_j}(\vec x^l)=
\begin{cases} 
c_{i,r}(x_r^l)\,x_{i,r}^l-c_{i,r}(x_r^l-1)\,(x_{i,r}^l-1), & \text{if }r=r(i_j,\vec x)=r_{l}\\
c_{i,r}(x_r^l+1)\,x_{i,r}^l-c_{i,r}(x_r^l)\,(x_{i,r}^l-1), & \text{if }r=r(i_j,\vec x)\neq r_{l}.\end{cases}
\end{align}
Intuitively, if $r(i_j,\vec x)=r_{l}$, the value $\Delta_{i_j}(\vec x)$ measures the \emph{cost saving} on resource 
$r(i_j,\vec x)$ if $i_j$ (or any other unit of player $i$ on resource $r(i_j,\vec x)$)
is removed from $r(i_j,\vec x)$.
If $r(i_j,\vec x)\neq r_{l}$,
the value $\Delta_{i_j}(\vec x)$ measures
the cost saving if  $i_j$ is removed from $r(i_j,\vec x)$ after the total allocation 
has been increased by one unit by some other player.
For a strategy profile $\vec x$
we define $\Delta(\vec x)=(\Delta_{i_j}(\vec x))_{i=1,\dots,n, j=1,\dots, d_i}$ to be the vector
of marginal costs and let
$\bar \Delta(\vec x)$ be the vector of marginal costs sorted in non-increasing order.
We claim that $\bar \Delta(\vec x)$ decreases lexicographically
during the while-loop.
To see this, consider an iteration $l$ in which some unit $i_j$
of player $i$
is moved from resource $r_{l-1}$ to resource $r_{l}$.

For proving $\bar \Delta(\vec x^{l})<_{\text{lex}} \bar \Delta(\vec x^{l-1})$,
we first observe that 
we only have to care for $\Delta$-values that correspond to units $i_j$ of the deviating player~$i$, because for all players $h\neq i$ we obtain
$\Delta_{h_j}(\vec x^{l-1})=\Delta_{h_j}(\vec x^{l})$ 
for all $j=1,\dots, d_h$.
This follows immediately if 
$h_j$ is neither assigned to $r_{l-1}$ nor to $r_{l}$.
If $h_j$ is assigned to $r_{l-1}$ or $r_{l}$,
then we switch the case in \eqref{def:delta},
and the claimed equality still holds.
It remains to consider the $\Delta$-values corresponding to the units of the deviating player~$i$. Recall that the deviation of player~$i$ consists of moving unit $i_j$ from resource $r_{l-1}$ to resource $r_{l}$.
We obtain
\begin{align*}
\Delta_{i_j}(\vec x^{l-1}) &=c_{i,r_{l-1}}(x_{r_{l-1}}^l)\,x_{i,r_{l-1}}^l-c_{i,r_{l-1}}(x_{r_{l-1}}^l-1)\,(x_{i,r_{l-1}}^l-1)
\\&>c_{i,r_{l}}(x_{r_{l}}^l+1)\,(x_{i,r_{l}}^l+1)-c_{i,r_{l}}(x_{r_{l}}^l)\,x_{i,r_{l}}^l =  \Delta_{i_j}(\vec x^{l}),
\end{align*}
where the inequality follows since player $i$ strictly improves.
For every unit $i_m$ of player~$i$ that is assigned to resource $r_l$ as well, i.e, $ r(i_m,\vec x^{l})= r(i_j,\vec x^{l})=r_{l}$, we have
 $\Delta_{i_j}(\vec x^{l})=\Delta_{i_m}(\vec x^{l})$
 since the $\Delta$-value is the same for all units of a single player
assigned to the same resource. The $\Delta$-values
of such units $i_m$ might have increased, but only to the $\Delta$-value of unit $i_j$.

Next, consider the $\Delta$-values of a unit $i_m$ assigned to resource $r_{l-1}$, i.e., $ r(i_m,\vec x^{l})= r(i_j,\vec x^{l})=r_{l-1}$.
We obtain
\begin{align*}
\Delta_{i_m}(\vec x^{l})&=c_{i,r_l}(x_{r_{l-1}}^l)\,(x_{i,r_{l-1}}^l-1)-c_{i,r_{l-1}}(x_{r_{l-1}}^l-1)\,(x_{i,r_{l-1}}^l-2)\\
&\leq c_{i,r_{l-1}}(x_{r_{l-1}}^l)\,x_{i,r_{l-1}}^l-c_{i,r_{l-1}}(x_{r_{l-1}}^l-1)\,(x_{i,r_{l-1}}^l-1)
 = \Delta_{i_m}(\vec x^{l-1}),
\end{align*}
where for the inequality we used that $c_{i,r}(x_{r_{l-1}})\geq c_{i,r}(x_{r_{l-1}}-1)$ as $c_{i,r}$ is nondecreasing.

Altogether, the $\Delta$-values of all units of all players $h\neq i$
have not changed, for player $i$, the  $\Delta$-values of remaining units assigned to resource $r_{l-1}$
 decreased, and the $\Delta$-values assigned to resource $r_{l}$ increased exactly to $\Delta_{i_j}(\vec x^{l})$
which is strictly smaller than $\Delta_{i_j}(\vec x^{l-1})$.
Thus, $\bar \Delta(\vec x^{l})<_{\text{lex}} \bar \Delta(\vec x^{l-1})$ follows.
\qed\end{proof}
The following corollary states an upper bound on the number of iterations of the algorithm in terms of $\delta = \max_{i \in N} d_i$.
The proof can be found in Appendix~\ref{apx:cor}.
\begin{corollary}\label{cor:runtime}
The number of iterations is at most $n^{\dmax+1}m^{\dmax} \dmax^{\dmax+1}$, which yields a polynomial algorithm computing a pure Nash equilibrium for constant $\dmax$. 
\end{corollary}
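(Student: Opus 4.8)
The plan is to bound the two nested loops of Algorithm~\ref{alg:greedy} separately. The outer \textbf{for}-loop is executed exactly $\sum_{i \in N} d_i \le n\dmax$ times, so it suffices to bound, for one iteration of the \textbf{for}-loop, the number of iterations of the inner \textbf{while}-loop by $n\,(m\dmax)^{\dmax}$; multiplying gives $n\dmax \cdot n(m\dmax)^{\dmax} = n^{2}m^{\dmax}\dmax^{\dmax+1} \le n^{\dmax+1}m^{\dmax}\dmax^{\dmax+1}$, using $\dmax \ge 1$. Since this is polynomial in $n$ and $m$ for fixed $\dmax$ and every iteration is implementable in polynomial time, the second assertion follows.

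So I would fix one iteration of the \textbf{for}-loop, let $\vec x'$ be the preliminary equilibrium of the reduced game at its start, with induced loads $x_r'$, and recall from the proof of Theorem~\ref{thm:main} that every intermediate profile $\vec x^{l}$ arising in the \textbf{while}-loop has load $x_r'$ on every resource except a unique resource $r_{l}$, whose load is $x_{r_l}'+1$. Inspecting the two cases of~\eqref{def:delta} and using that $x_r^{l} = x_r'+1$ exactly when $r = r_{l}$, one sees that the marginal cost of a unit of player~$i$ lying on resource $r$ equals $c_{i,r}(x_r'+1)\,x_{i,r}^{l} - c_{i,r}(x_r')\,(x_{i,r}^{l}-1)$, no matter which resource currently plays the role of $r_{l}$. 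Hence, if $\bar\Delta^{(i)}(\vec x)$ denotes the vector obtained from $\bar\Delta(\vec x)$ by keeping only the marginal costs of the units of player~$i$ (sorted non-increasingly), then $\bar\Delta^{(i)}(\vec x^{l})$ depends on the strategy $\vec x_i^{l} \in X_i(\bar d_i)$ alone; in particular $\bar\Delta^{(i)}$ does not change in any \textbf{while}-iteration in which a player $h \neq i$ deviates.

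Next I would analyse a \textbf{while}-iteration in which player~$i$ moves a unit $i_j$ from $r_{l-1}$ to $r_{l}$. The estimates established in the proof of Theorem~\ref{thm:main} show that the marginal costs of the units of~$i$ remaining on $r_{l-1}$ do not increase, the marginal costs of the units of~$i$ that sit on $r_{l}$ increase at most to $\Delta_{i_j}(\vec x^{l})$---which is strictly smaller than the previous common value $\Delta_{i_j}(\vec x^{l-1})$ of the units of~$i$ on $r_{l-1}$---and the marginal costs of the units of~$i$ on every other resource are unchanged. As $i_j$ itself leaves $r_{l-1}$, the multiplicity of the value $\Delta_{i_j}(\vec x^{l-1})$ in $\bar\Delta^{(i)}$ strictly decreases while no larger value gains multiplicity, so $\bar\Delta^{(i)}(\vec x^{l}) <_{\lex} \bar\Delta^{(i)}(\vec x^{l-1})$. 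Therefore, within a single \textbf{for}-iteration the vector $\bar\Delta^{(i)}$ strictly decreases lexicographically whenever player~$i$ deviates and is constant otherwise; hence player~$i$ deviates at most as many times as $\bar\Delta^{(i)}$ has distinct possible values, which is at most the number of feasible strategies $\vec x_i \in X_i(\bar d_i)$, and thus at most the number of vectors in $\N^{R}$ with entry sum $\bar d_i \le \dmax$, i.e., at most $\binom{\dmax+m-1}{m-1} \le (m\dmax)^{\dmax}$. Summing over the $n$ players bounds the \textbf{while}-iterations of one \textbf{for}-iteration by $n(m\dmax)^{\dmax}$, which completes the plan. The one delicate point is the per-player strict lexicographic decrease: one must check that the units of the deviating player that migrate onto the newly overloaded resource $r_{l}$ do not overshoot the value $\Delta_{i_j}(\vec x^{l-1})$ that the departing unit held on $r_{l-1}$, which is exactly where strong semi-convexity---already used in the proof of Theorem~\ref{thm:main}---comes in.
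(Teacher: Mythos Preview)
Your proposal is correct and follows essentially the same route as the paper: fix a \textbf{for}-iteration, observe from \eqref{def:delta} that each unit's marginal cost depends only on the owning player's own allocation $x_{i,r}$ (relative to the fixed baseline loads $x_r'$), deduce that the per-player sorted marginal-cost vector strictly lex-decreases whenever that player deviates, and bound the number of deviations by the number of such vectors. The only cosmetic difference is that you bound the number of per-player vectors by the number of strategies $\binom{\bar d_i+m-1}{m-1}\le (m\dmax)^{\dmax}$, whereas the paper bounds it directly by $(m\,d_i)^{d_i}$ via ``each of the $d_i$ entries takes one of at most $m\,d_i$ values''; both lead to the stated bound.
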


\section{Sensitivity Analysis for Integral Polymatroids}\label{sec:supermatroids}

It remains to show the key sensitivity results of Lemma~\ref{lem:dem} and Lemma~\ref{lem:load}. For ease of notation, let us drop the index~$i$ form the statements of the lemmata and let us consider a fixed integral polymatroid base polyhedron
$$\mathcal{B}_{f}(d)
= \Bigl\{\vec x\in \N^{R} \mid \sum_{r\in U} x_{r} \le f(U) \text{ for each } U \subseteq 
R, \sum_{r\in R} x_{r}=d \Bigr\}.$$
w.r.t.\ some submodular, monotone, and normalized function $f:2^R\to \N$, and some demand value $d\in N$.

We identify the points in $\mathcal{B}_{f}(d)$ with a set family $\mathcal{F}(d)$ on a largely extended ground set $E$ as follows:
For each resource $r\in R$, let $u_{r}=f(\{r\})$ and let $K_r=\{r_1\prec \ldots \prec r_{u_r}\}$ be a totally ordered set (chain) with $|u_r|$ distinct elements $r_1,\dots,r_{u_r}$.
Let further $E=\bigcup_{r\in R} K_r$ be the disjoint union of these chains. Then, $P = (E,\preceq)$ is a partially ordered set (poset) where two elements $e,e'$ are comparable if and only if they are contained in the same chain $K_r$ for some $r \in R$.
 Furthermore, let $\mathcal{D}(P)$ denote the set of \emph{ideals} of $P$, i.e.,
 $\mathcal{D}(P)$ consists of all subsets $I\subseteq E$ such that for each $e\in I$ all elements $g\prec e$ also belong to $I$.
 Note that there is a one-to-one correspondence between the sets in $\mathcal{D}(P)$ and the integral points in
 $\{\vec x\in \N^R\mid x_r\le f(\{r\})\ \forall r\in R\}$.
 As a consequence, the feasible points in the integral polymatroid $\mathbb{P}_f$ can be identified with the set family

\begin{align}
\F &:=\Bigl\{F\in \mathcal{D}(P)\mid \Bigl|\bigcup_{r\in U} K_r \cap F\Bigr| \le f(U) \text{ for each } U\subseteq R\Bigr\}.
\label{F}
\intertext{Accordingly, the vectors contained in the polymatroid base polyhedron  $\mathcal{B}_f(d)$  for $d\in \N$ can be identified with the set family}
\mathcal{F}(d) &:=\Bigl\{F\in \F\mid |F|=d \Bigr\}.\label{B}
\end{align}

In fact, it is known (see, e.g., \cite{schrijver2003combinatorial} and \cite{helgason1974aspects}) that any integral polymatroid $\mathbb{P}_f$ can be reduced to an ordinary matroid
$\mathcal{M}=(E,r)$ on ground set $E$ with rank function $r:2^E\to \N$ defined via
$$r(U):=\min_{T\subseteq R} \Bigl(\Bigl|U\setminus{\bigcup_{r\in T} K_r}\Bigr| + f(T) \Bigr)$$
for all $U\subseteq E$.
It turns out that the independent sets in $\mathcal{M}$ of cardinality $d$ are exactly the ideals in $\mathcal{F}(d)$ as defined above.
Applying this kind of transformation for each player~$i$, we can identify the strategy set $\B_{f^{(i)}}(d_i)$ of each player~$i$ with the set family $\F_i(d_i)$, and this set family, in turn, with the matroid $\M_i = (E_i,r_i)$. 

With this notation, let us now return to the problem of finding a best response $\vec x_i$ of player~$i$ towards a strategy profile $\vec a=\vec x_{-i}\in \N^R$ of the remaining players. Note that, for $\vec a\in \N^R$, the player-specific strongly semi-convex cost functions $c_{i,r}: \N \to \N$
induce weight functions $w_i^{\vec a}:E\to \N$ on the ground set $E$ constructed above via
\begin{align*}
w_i^{\vec a}(r_t):= t c_{i,r}(a_r+t) - (t-1) c_{i,r} (a_r+t-1) \quad \text{ for all } r\in R, t \in \{1, \dots, d_i\}.
\end{align*}
Hence, finding a best response $\vec x_i \in \B_{f^{(i)}}(d_i)$ reduces to the problem of minimizing a linear function over the independent sets of cardinality $d_i$ of the matroid $\mathcal{M}_i=(E_i, r_i)$ associated with the submodular function $f^{(i)}$.

However, the ground set $E_i$ can be of exponential size, so that it is not a priori clear whether the matroid greedy algorithm
minimizes a linear weight function over the base polyhedron $\mathcal{B}_{f^{(i)}}(d_i)$ in strongly polynomial time.
Still, since we assume that the cost functions $c_{i,r} : \N \to \N$ are strongly semi-convex, it follows that the induced weight functions
$w_i^{\vec a}:E\to \N$ are \emph{admissible} in the sense that $e\prec g$ implies $w_i(e)\le w_i(g)$.

Given an ideal $F\in \mathcal{D}(P)$, we denote by
$F^+$  the set of $\preceq$-maximal elements in $F$, and by
 $(E \setminus F)^-$ the set of $\preceq$-minimal elements in $E \setminus F$. 
For $\mathcal{F}(d)$  as defined in \eqref{B}, and any admissible weight functions $w : E\to \N$,  
Faigle \cite{faigle1979greedy} showed that the following \emph{ordered greedy algorithm}
determines an ideal of minimal weight in $\mathcal{F}(d)$ (provided $\mathcal{F}(d)\neq \emptyset)$:

\begin{algorithm}[\textwidth]
 \caption{Ordered Greedy Algorithm}
 \label{alg:ordered_greedy}
 \Indm\Indmm
  \Indp\Indpp
$F \leftarrow \emptyset$\;
		\For{$k=1,\dots,d$}{	
			Let $e_k \leftarrow \arg\min \{w(e) : e \in (E \setminus F)^- \text{ and } F + e \in \mathcal{F}\}$\;
			$F \leftarrow F + e_k$\;
		}
Return $\mathcal{F}$\;
\end{algorithm}

In fact, the greedy algorithm determines in each iteration $k\le d$ an ideal of minimal weight in $\mathcal{F}(k)$.
The following proposition arises as a consequence of the discussion above and implies Lemma~\ref{lem:dem}.

\begin{proposition} 
\label{l.demand-increase}
Let $\F\subseteq \mathcal{D}(P)$ as defined in (\ref{F}), $k\in \N$, and $w:E\to\R$ admissible.
Suppose $F$ is of minimal $w$-weight in $\mathcal{F}(k)$. Then there exists $e\in E\setminus{F}$ such that
$F+e$ is of minimal $w$-weight in $\mathcal{F}(k+1)$
\end{proposition}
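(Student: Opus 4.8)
The plan is to exploit the matroid reduction described just above the statement: $\mathcal{F}(k)$ is the set of independent sets of cardinality $k$ of the matroid $\M=(E,r)$, so $F$ is a minimum-weight independent set of size $k$, and we want to find a single element $e\in E\setminus F$ whose addition keeps us independent and yields a minimum-weight independent set of size $k+1$. First I would invoke the classical matroid fact (the correctness of the greedy algorithm for matroids) that says: among all independent sets of size $k+1$, a minimum-weight one can be obtained by taking \emph{any} minimum-weight independent set $F$ of size $k$ and adding to it the cheapest element $e$ such that $F+e$ is independent. This is exactly the statement that the greedy algorithm, run for $k+1$ steps, produces at step $k+1$ an optimum of size $k+1$, and that it does not matter which optimum of size $k$ we reached after $k$ steps (all size-$k$ optima are ``exchange-equivalent'' with respect to weight). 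So the abstract matroid statement gives us an element $e\in E\setminus F$ with $F+e$ independent and of minimum weight in the matroid among size-$(k+1)$ independent sets.

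The remaining issue is that ``independent set of size $k+1$ in $\M$'' must be translated back to ``ideal in $\mathcal{F}(k+1)$'': I need to check that $F+e\in\mathcal D(P)$, i.e., that $F+e$ is again an ideal of the poset $P$, not merely an independent set of $\M$. Here I would use admissibility of $w$ together with the structure of $\M$: by construction the independent sets of $\M$ of a given cardinality are precisely the ideals of that cardinality satisfying the polymatroid inequalities, so an independent set of $\M$ of size $k+1$ is automatically an ideal. (If one prefers to argue directly: if $F+e$ were not an ideal, then $e=r_t$ for some $r$ with $r_{t-1}\notin F+e$; since $r$-chains are independent in $\M$ up to $u_r=f(\{r\})$ elements, $F+r_{t-1}$ would also be independent, and admissibility gives $w(r_{t-1})\le w(r_t)=w(e)$, so replacing $e$ by $r_{t-1}$ gives an independent set of the same size and no larger weight that \emph{is} one step closer to being an ideal; iterating, we may assume $F+e$ is an ideal without increasing the weight.) Thus $F+e\in\mathcal{F}(k+1)$.

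Finally I would tie the weight-minimality within $\mathcal{F}(k+1)$ back to weight-minimality within the matroid: since every ideal in $\mathcal{F}(k+1)$ is in particular an independent set of $\M$ of cardinality $k+1$, the minimum weight over $\mathcal{F}(k+1)$ is at least the minimum weight over size-$(k+1)$ independent sets of $\M$, and $F+e$ attains the latter; hence $F+e$ is of minimal $w$-weight in $\mathcal{F}(k+1)$, which is the claim. Proposition~\ref{l.demand-increase} applied to the player-specific matroid $\M_i=(E_i,r_i)$ and the induced admissible weights $w_i^{\vec a}$, with $k=d_i$, then immediately yields Lemma~\ref{lem:dem}, since a single added element of $E_i$ corresponds to increasing exactly one coordinate $x_{i,r}$ by one, i.e.\ Hamming distance one.

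I expect the only real obstacle to be the ``ideal versus independent set'' bookkeeping in the second paragraph — making sure the element returned by the abstract matroid exchange argument can be chosen so that $F+e$ is genuinely an ideal (a valid point of $\B_f(k+1)$) and not just matroid-independent; admissibility of $w$ is exactly what makes this work, and everything else is a direct appeal to greedy correctness for matroids.
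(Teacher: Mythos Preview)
Your proposal is correct and follows the paper's approach: the paper treats the proposition as an immediate consequence of the matroid reduction together with the (ordered) greedy correctness result of Faigle, and you spell this out explicitly, including the standard exchange argument showing that \emph{any} minimum-weight independent set of size $k$ extends by one element to a minimum-weight independent set of size $k+1$.

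One caveat worth flagging: your first alternative in the second paragraph---that independent sets of $\M$ of a given cardinality are automatically ideals---is false as stated (for instance $\{r_3\}\subseteq K_r$ is independent in $\M$ but not an ideal of $P$); the paper's analogous sentence is similarly loose. Your fallback argument, however, is exactly right and is what actually does the work: if the greedily chosen $e=r_t$ fails to make $F+e$ an ideal, replacing $e$ by $r_{t-1}$ leaves $|(F+e)\cap\bigcup_{s\in T}K_s|$ unchanged for every $T\subseteq R$ (so independence is preserved) and, by admissibility of $w$, does not increase the weight; iterating yields an element of $\mathcal{F}(k+1)$ of the same minimal weight. The paper sidesteps this bookkeeping by invoking Faigle's ordered greedy directly, which only ever adds $\preceq$-minimal elements of $E\setminus F$ and therefore stays inside $\mathcal D(P)$ throughout.
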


Due to the reduction of integral polymatroids to ordinary matroids, most of the structural properties of matroids carry over to integral polymatroids.
For example, the following proposition follows as a consequence of the well-known fact,
that for any basis $B$ of an ordinary matroid $\mathcal{M}$ which is not of minimal weight, there exists a local improvement step
towards a basis $B-e+f$ of smaller weight.

\begin{proposition}\label{l.local-improvement}
Suppose $F\in \mathcal{F}(k)$ is not of minimal $w$-weight for some admissible function $w:E\to\R$.
Then there exists some local improvement step
$F\rightarrow F-e+g\in \mathcal{F}(k)$ such that $w(e)>w(g)$.
\end{proposition}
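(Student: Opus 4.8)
The plan is to reduce Proposition~\ref{l.local-improvement} to the quoted local-improvement property for bases of an ordinary matroid, by passing to the \emph{$k$-truncation} of $\mathcal{M}$. Recall from the discussion preceding the statement that the members of $\mathcal{F}(k)$ are precisely the independent sets of cardinality $k$ of the matroid $\mathcal{M}=(E,r)$; in particular, since $F\in\mathcal{F}(k)$ is given, we have $k\le r(E)$.

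First I would introduce the truncated matroid $\mathcal{M}^{(k)}=(E,r^{(k)})$ with rank function $r^{(k)}(U)=\min\{r(U),k\}$ for all $U\subseteq E$. It is standard that $r^{(k)}$ is again a matroid rank function (it is integral, normalized, monotone, submodular, and subcardinal), that $\mathcal{M}^{(k)}$ has rank $k$, and that its bases are exactly the cardinality-$k$ independent sets of $\mathcal{M}$, i.e., exactly the members of $\mathcal{F}(k)$. Since $F$ is not of minimal $w$-weight in $\mathcal{F}(k)$, it is therefore not a minimum-weight basis of $\mathcal{M}^{(k)}$.

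Next I would invoke the quoted fact for the matroid $\mathcal{M}^{(k)}$ and the basis $F$: there exist $e\in F$ and $g\in E\setminus F$ with $F-e+g$ again a basis of $\mathcal{M}^{(k)}$ and $w(F-e+g)<w(F)$. Translating back into the language of ideals, this says precisely that $F\rightarrow F-e+g\in\mathcal{F}(k)$ is a local improvement step with $w(e)>w(g)$, which is the assertion. I would remark in passing that admissibility of $w$ plays no role in this argument; it only enters when one wants the ordered greedy Algorithm~\ref{alg:ordered_greedy} to remain among the $\preceq$-minimal elements of $E\setminus F$.

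There is essentially no serious obstacle here: the one point to verify carefully is that the cardinality-$k$ independent sets of a matroid are exactly the bases of its $k$-truncation, after which the statement is a direct transcription of the standard basis exchange/improvement lemma. Should one wish to avoid invoking truncation, an equivalent route is to pick a minimum-weight $F^*\in\mathcal{F}(k)$ with $w(F^*)<w(F)$, apply the symmetric exchange theorem to obtain a bijection $\phi:F\setminus F^*\to F^*\setminus F$ with $F-e+\phi(e)\in\mathcal{F}(k)$ for every $e$, and conclude by summing $w$ over $\phi$ that some $e$ must satisfy $w(e)>w(\phi(e))$; I would nonetheless present the truncation argument, as it is the shorter of the two.
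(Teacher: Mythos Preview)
Your truncation argument is the right idea and matches the paper's one-line justification, but there is a genuine gap in the step where you ``translate back into the language of ideals.'' Being a basis of $\mathcal{M}^{(k)}$ only says that $F-e+g$ is an independent set of cardinality $k$ in $\mathcal{M}$; membership in $\mathcal{F}(k)$ additionally requires $F-e+g$ to be an \emph{ideal} of the poset $P$. These are not the same thing: in the Helgason matroid a set $I$ is independent iff $|I\cap\bigcup_{r\in T}K_r|\le f(T)$ for every $T\subseteq R$, a condition that depends only on the sizes $|I\cap K_r|$, so for instance $\{r_2\}\subseteq K_r$ is independent whenever $u_r\ge 2$ but is never an ideal.

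This is precisely where admissibility enters, so your remark that it ``plays no role'' is incorrect; without it the proposition is actually false. Take $R=\{r,s\}$ with $f(\{r\})=f(\{s\})=f(\{r,s\})=2$ and (inadmissible) weights $w(r_1)=10$, $w(r_2)=1$, $w(s_1)=w(s_2)=5$. Then $\mathcal{F}(2)=\{\{r_1,r_2\},\{s_1,s_2\},\{r_1,s_1\}\}$ with weights $11,10,15$; the set $F=\{r_1,r_2\}$ is non-optimal, yet its only neighbour in $\mathcal{F}(2)$ is $\{r_1,s_1\}$, which is heavier. The matroid-level improvement $F-r_1+s_1=\{r_2,s_1\}$ your argument would produce is simply not an ideal. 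The repair is short: if the matroid exchange gives $e\in K_r$, $g\in K_s$ with $w(e)>w(g)$, admissibility forces $r\neq s$ (otherwise $e\prec g$ and $w(e)\le w(g)$); now replace $e$ by the $\preceq$-maximal element $e'$ of $F\cap K_r$ and $g$ by the $\preceq$-minimal element $g'$ of $K_s\setminus F$. Independence is preserved because only chain-intersection sizes matter, $F-e'+g'$ is now an ideal, and admissibility yields $w(e')\ge w(e)>w(g)\ge w(g')$.
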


The possibility to improve a non-optimal basis by local steps, 
as well as the possibility to uncross tight constraints due to the submodularity of the rank functions,
are the main ingredients of the proof of the following theorem, which implies Lemma~\ref{lem:load}.

\begin{theorem}\label{t.load-increase}
Let $F\in \mathcal{F}(k)$ be of minimal weight w.r.t.\ the admissible weight function $w$.
If the weight function $\bar{w}$ differs from $w$ only on chain $K_{r^*}$ such that
\begin{align*}
\bar{w}(r_k)=
\begin{cases}
w(r_k), & \text{ if }r \neq r^*,\\
w(r_{k+1}), & \text{ else},
\end{cases}
\end{align*}
then there exists $F'=F-e+g\in \mathcal{F}(k)$  of minimal weight w.r.t.\ $\bar{w}$.
\end{theorem}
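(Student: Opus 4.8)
The plan is to show that one never needs more than the single swap that removes the $\preceq$-maximal unit of $F$ on the chain $K_{r^*}$ and greedily re-fills one unit elsewhere, and to certify optimality of the resulting set by checking that it admits no improving local step (Proposition~\ref{l.local-improvement}). Write $K_{r^*}=\{r^*_1\prec\cdots\prec r^*_u\}$ (extending it by one $\preceq$-maximal dummy of large weight if necessary; this is harmless since $f(\{r^*\})$ is unchanged) and put $x^*:=|F\cap K_{r^*}|$. Two reductions are immediate: since $w$ is admissible and $\bar w$ merely shifts the weights along $K_{r^*}$ up by one position, $\bar w$ is again admissible and $\bar w(e)\ge w(e)$ for every $e$; and telescoping $\bar w(r^*_t)=w(r^*_{t+1})$ over a prefix gives the gap identity $\bar w(G)=w(G)+w(r^*_{\,y_G+1})-w(r^*_1)$ for all $G\in\mathcal F(k)$, where $y_G:=|G\cap K_{r^*}|$. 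If $x^*=0$ then $\bar w(F)=w(F)\le w(G)\le\bar w(G)$ for all $G\in\mathcal F(k)$, so $F'=F$ works; assume henceforth $x^*\ge1$.

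First I would pin down where $F$ can be improved. If $F\to F-e+g$ is any $\bar w$-improving local step, then $e\in K_{r^*}$: otherwise $\bar w(e)=w(e)$, and $w$-minimality of $F$ with $F-e+g\in\mathcal F(k)$ forces $w(g)\ge w(e)=\bar w(e)>\bar w(g)\ge w(g)$, a contradiction. Since $F$ is an ideal, the only element of $K_{r^*}$ whose removal extends (by one addition) to an ideal is the $\preceq$-maximal one $r^*_{x^*}$, and the complementary addition then lies outside $K_{r^*}$. Hence $F$ is $\bar w$-minimal exactly when $\bar w(r^*_{x^*})\le\bar w(g)$ for every $g\notin K_{r^*}$ with $F-r^*_{x^*}+g\in\mathcal F(k)$. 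Let $g^*$ be such a $g$ of minimum $\bar w$-weight (if none exists, or if $\bar w(r^*_{x^*})\le\bar w(g^*)$, then $F$ has no improving step and is $\bar w$-minimal by Proposition~\ref{l.local-improvement}, and we take $F'=F$). So it remains to treat the case $\bar w(g^*)<\bar w(r^*_{x^*})$; set $F':=F-r^*_{x^*}+g^*\in\mathcal F(k)$, with $g^*\in K_{s^*}$, $s^*\ne r^*$. Note two numerical facts for later: $F'\in\mathcal F(k)$ and $w$-minimality of $F$ give $w(g^*)\ge w(r^*_{x^*})$, while $\bar w(g^*)<\bar w(r^*_{x^*})$ reads $w(g^*)<w(r^*_{x^*+1})$.

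To prove $F'$ is $\bar w$-minimal it suffices, again by Proposition~\ref{l.local-improvement}, to show $F'$ admits no $\bar w$-improving local step. I would argue by contradiction: suppose $F'\to F'-e_1+g_1$ is one, with $\bar w(e_1)>\bar w(g_1)$, where $e_1$ is $\preceq$-maximal in $F'$ on its chain and $g_1$ the next element on its chain; recall $F'$ agrees with $F$ off the chains $K_{r^*},K_{s^*}$, carrying $x^*-1$ units on $r^*$ and one more than $F$ on $s^*$. One then runs through the few possibilities for the chains of $e_1$ and $g_1$ ($e_1\in\{r^*_{x^*-1},\ g^*\}\cup\{\text{top units of other chains}\}$; $g_1\in\{r^*_{x^*}\}\cup\{\text{next units of other chains}\}$). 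In each case the composite move rewrites $F$, respectively $F-r^*_{x^*}$, into a set $\tilde F\in\mathcal F(k)$ that differs from it on at most three chains, and then either $w$-minimality of $F$ applied to $\tilde F$, or minimality of $g^*$ (when $g_1\notin K_{r^*}$ is a legal re-fill of $F-r^*_{x^*}$), yields a relation among $w(r^*_{x^*}),w(r^*_{x^*+1}),w(g^*),w(g_1),w(e_1)$ that contradicts $w(r^*_{x^*})\le w(g^*)<w(r^*_{x^*+1})$; admissibility of $w$ (equivalently: strong semi-convexity of the cost functions) is what makes these bookkeeping inequalities come out the right way.

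The one step that is not pure bookkeeping — and the one I expect to be the real obstacle — is establishing the feasibility of the transformed sets $\tilde F$, i.e.\ that they actually lie in $\mathcal F(k)$: simultaneously removing and adding units on two or three resources can a priori violate some rank inequality $\bigl|\bigcup_{r\in U}K_r\cap\,\cdot\,\bigr|\le f(U)$. Here I would use that the $F$-tight (resp.\ $F'$-tight) sets $U$ are closed under union and intersection by submodularity of $f$: intersecting all tight sets through a fixed resource identifies exactly which resources must shed a unit to keep feasibility, and one shows that whenever the required compensation is not directly available, the hypothesised legal step $F'\to F'-e_1+g_1$ would itself violate a (crossed) tight constraint, contradicting $F'-e_1+g_1\in\mathcal F(k)$. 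This uncrossing is precisely what replaces the trivial argument available for singleton strategy spaces: in the Helgason matroid an increase of the load on one resource changes the weights of many ground-set elements at once, so the admissibility of a re-routing cannot be checked resource by resource but must be argued through the rank function — this is the technically heaviest part and the only place where submodularity of $f$ is genuinely exploited.
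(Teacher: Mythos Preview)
Your outline is correct and matches the paper's proof in all essential respects: you pick the same single swap $F'=F-r^*_{x^*}+g^*$ (the paper phrases this as ``choose $\{e,g\}$ maximizing $\bar w(e)-\bar w(g)$'', which forces $e=r^*_{x^*}$ and $g=g^*$), you argue by contradiction assuming a further improving step $F'\to F'-p+h$, and you correctly identify the uncrossing of $F$-tight sets via submodularity of $f$ as the crux.

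The only substantive difference is organizational. You propose a direct case split on the chains of $e_1$ and $g_1$; the paper instead proves two short lemmata (\ref{l.1} and \ref{l.2}) that together yield ``$g\not\preceq h$, and either $F-p+h\in\mathcal F(k)$ or $F-e+h\in\mathcal F(k)$'', which collapses your case analysis to two branches. Each branch is then dispatched by the weight bookkeeping you describe plus one uncrossing of two tight sets $S,T\subseteq R$. Your sketch is honest that the feasibility of the composite sets is where the work lies and that submodularity is what makes it go through; the paper's lemma route simply packages that work more economically. One small caveat: your appeal to ``strong semi-convexity'' in the final paragraph is not needed here—Theorem~\ref{t.load-increase} uses only admissibility of $w$ and $\bar w$, and the paper's proof never invokes the cost model.
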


\begin{proof}
Let $F$ be of minimal $w$-weight in $\mathcal{F}(k)$ and $\bar{w}$ as described above.
If $F$ is also of minimal $\bar w$-weight,  the theorem follows using $e=g$.
If $F$ is not $\bar w$-minimal, then
by Proposition \ref{l.local-improvement}, there exists $e\in F^+$ and $g\in (E\setminus{F})^-$ with $\bar{w}(e)>\bar{w}(g)$ such that
$F-e+g\in \mathcal{F}(k)$.We choose the pair $\{e,g\}$ maximizing the improvement $\bar{w}(e)-\bar{w}(g)$.
Since $F$ is $w$-optimal, it follows that $e\in K_{r^*}$ and $g\notin K_{r^*}$.
We want to show that $F'=F-e+g$ is of minimal $\bar{w}$-weight.
Suppose not.
By Proposition \ref{l.local-improvement}, there exists $F''=F'-p+h\in \mathcal{F}(k)$ with $\bar{w}(p)>\bar{w}(h)$. 
It follows that $h\notin K_{r^*}$.
Lemmata \ref{l.1} and \ref{l.2} which are stated and proven in Appendix~\ref{appendix:load-increase} imply
$g\not\preceq h$, and either $F-p+h\in \mathcal{F}(k)$ or $F-e+h\in \mathcal{F}(k)$.

We proceed to show the desired contradiction. 
First, we claim that $F-p+h \notin \mathcal{F}(k)$. To see this, note that
if $F-p+h\in \mathcal{F}(k)$, then also $w(p)\le w(h)$ by the $w$-optimality of $F$.
On the other hand,
 $\bar{w}(p)>\bar{w}(h)=w(h)$, which implies $\bar{w}(p)>w(p)$.
 This is only possible if $p\in K_{r^*}$.
 However, since $p\in F'$ and $e\not \in F'$,
 it follows that $p\prec e$.
 Thus $F-p+h$ is not an ideal in poset $P$, and we reach a contradiction.
 
Hence, we have $F-p+h\notin \mathcal{F}(k)$, but $F-e+h\in \mathcal{F}(k)$.
  Note that $\bar{w}(h)\ge \bar{w}(g)$, as otherwise, $\bar{w}(F-e+h)<\bar{w}(F-e+g)$
  in contradiction to the choice of the pair $\{e,g\}$.
  Moreover, since $g,h\notin K_{r^*}$, it follows that
  $w(h)=\bar{w}(h)\ge \bar{w}(g)=w(g).$

 We proceed to show that $F-p+g\notin \mathcal{F}(k)$.
  Suppose not.
  Then, by $w$-optimality of $F$, $w(p)\le w(g)$.
  On the other hand, 
  $\bar{w}(p)>\bar{w}(h)=w(h)$,
  since $\bar{w}(F'-p+h)<\bar{w}(F')$.
  If $p\not\preceq e$, then
  $w(h)<\bar{w}(p)=w(p)\le w(g)$, a contradiction.
  Thus, $p\prec e$ implying that $\bar{w}(p)=w(e)>\bar{w}(h)=w(h)$,
  in contradiction to $F-e+h\in \mathcal{F}(k)$.
  
   Summarizing, we have
  $F-p+g\notin \mathcal{F}(k)$, $F-p+h\notin \mathcal{F}(k)$ and $g\not\preceq h$.
  We show that this is not possible.
  In case $p\prec e$, we have
  $w(e)=\bar{w}(p)>\bar{w}(h)=w(h)$.
  This is in contradiction to the $w$-optimality of $F$,
  as we assumed that $F-e+h\in \mathcal{F}(k)$.
  Hence $p\not\preceq e$, implying that
  $F-p+h$ and $F-p+h$ are ideals in $P$.
  
 Thus, as $F-p+g\notin \mathcal{F}(k)$ and $F-p+h\notin \mathcal{F}(k)$,
 there must be sets $S, T\subseteq R$ with
 $|\bigcup_{r\in S} K_r \cap (F-p+g)| > f(S)$
 and $|\bigcup_{r\in T} K_r \cap (F-p+h)| > f(T)$.
 Since $F$ is feasible, it follows that
 \begin{align*}
 f(S)< \Big|\bigcup_{r\in S} K_r \cap (F-p+g)\Big| &\le \Big|\bigcup_{r\in S} K_r \cap F\Big|+1\le  f(S)+1,\\
 f(T)< \Big|\bigcup_{r\in T} K_r \cap (F-p+h)\Big| &\le \Big|\bigcup_{r\in T} K_r \cap F\Big|\le f(T)+1.
 \end{align*}
 Thus, $g\in \bigcup_{r\in S} K_r$, $p\notin \bigcup_{r\in S} K_r$,
 $h\in \bigcup_{r\in T} K_r$, and $p\notin \bigcup_{r\in T} K_r$.
 Moreover, by the integrality of $f$, $|\bigcup_{r\in S} K_r \cap F|=f(S)$ and
 $|\bigcup_{r\in T} K_r \cap F|= f(T)$.
 By the submodularity of $f$, it is not hard to see that also for $S\cap T$ and $S\cup T$ we have
\begin{align}
\label{eq:star3}
\Bigl|\bigcup_{r\in S\cap T} K_r \cap F\Bigr| = f(S\cap T) \quad \mbox{and} \quad \Bigl|\bigcup_{r\in S\cup T} K_r \cap 
F\Bigr|=f(S\cup T).
\end{align}
 However, $F-e+g-p+h \in \mathcal{F}(k)$ implies  
 $\{e,g,p,h\}\cap \bigcup_{r\in S\cup T} K_r = \{g,e,h\}.$
 Thus, using \eqref{eq:star3}, we obtain
 $|\bigcup_{r\in S\cup T} K_r\cap (F-e+g-p+h)| = |\bigcup_{r\in S\cup T} K_r\cap F|+1=f(S\cup T)+1$,
 a  contradiction to $F-e+g-p+h \in \mathcal{F}(k)$.
 \qed\end{proof}

\bibliographystyle{splncs03}
\bibliography{../master-bib}

\newpage
\appendix
\renewcommand\thelemma{\thesection\arabic{lemma}}
\setcounter{lemma}{0}
\section{Appendix}
\subsection{Proof of Proposition~\ref{pro:semi_convex}}
\begin{proof}
We start to show the first part of the claim. Let $c : \N \to \N$ be a nondecreasing convex function. As marginal 
differences are nondecreasing, we obtain
\begin{align}
\label{eq:1}
c(a+x) - c(a+x-1) \leq c(a+y) - c(a+y-1)
\end{align}
for all $a \in \N$ $x,y \in \N$ with $1 \leq x \leq y$. Increasing $a$ does not decrease the right hand side of \eqref{eq:1} 
and, thus,
\begin{align}
\label{eq:2}
c(a+x) - c(a+x-1) \leq c(b+y) - c(b+y-1)
\end{align}
for all $a,b \in \N$ with $a \leq b$ and $x,y \in \N$ with $1 \leq x \leq y$. As both differences are nonnegative and $x 
\leq y$, we obtain a valid inequality when multiplying the left hand side of \eqref{eq:2} with $x$ and the right hand side 
of \eqref{eq:2} with $y$. This implies 
\begin{align}
\label{eq:3}
c(a+x)x - c(a+x-1)x \leq c(b+y)y - c(b+y-1)y.
\end{align}
Finally, we have $c(a+x-1) \leq c(b+y-1)$ as $c$ is nondecreasing. Adding this inequality to \eqref{eq:3} gives the claimed 
result.

To see that there is a strongly semi-convex function that is non convex, consider the function $c : \N \to \N$ defined as 
$c(x) = x$, if $x \leq 2$ and $c(x) = x - 1/4$, if $x \geq 3$. We proceed to show that $c$ is strongly semi-convex. For the 
following calculations, for an event $E$, we write $\chi_E$ for the indicator variable of event $E$, i.e., $\chi_E = 1$, if 
$E$ is true, and $\chi_E = 0$, otherwise. We calculate
\begin{align*}
&x \bigl(c(a+x)x - c(a+x-1)\bigr) + c(a+x-1)\\
&= x\Bigl(1- \frac{\chi_{a+x=3}}{4}\Bigr) + a + x - 1 - \frac{\chi_{a+x \geq 4}}{4}.
\end{align*}
Thus, $c$ is strongly semi-convex if and only if
\begin{align*}
x\Bigl(1- \frac{\chi_{a+x=3}}{4}\Bigr) + a + x - \frac{\chi_{a+x \geq 4}}{4} \leq y\Bigl(1- \frac{\chi_{b+y=3}}{4}\Bigr) + b 
+ y - \frac{\chi_{b+y \geq 4}}{4}
\end{align*}
for all $a,b,x,y \in \N$ with $a \leq b$ and $1 \leq x \leq y$. Clearly, if $a=b$ and $x=y$, this inequality is satisfied 
trivially, so we may assume that either $y \geq x+1$ or $b \geq a+1$. To prove strong semi-convexity, it is sufficient to 
show
\begin{align*}
2x + a + x \leq 2y + b - y \cdot \frac{\chi_{b+y=3}}{4} - \frac{1}{4}.
\end{align*}
As $b \geq 0$, we obtain $y \cdot \chi_{b+y=3}/4 \leq 3/4$. Using that either $y \geq x+1$ or $b \geq a+1$, we obtain the 
desired result.
\qed\end{proof}

\subsection{Proof of Corollary~\ref{cor:runtime}}\label{apx:cor}
\begin{proof}
We analyze the worst-case runtime of Algorithm~\ref{alg:greedy}. To this end, let us fix an iteration of the for-loop. In the proof of Theorem~\ref{thm:main}, we showed that during this iteration, for each player, the sorted vector of marginal costs (as defined in \eqref{def:delta}) decreases lexicographically during the while-loop. Moreover, the marginal cost of a particular unit of demand $i_j$ of player~$i$ assigned to a resource $r$ does not depend on the aggregated demand $\sum_{j \in N} x_{j,r}$ of all players for resource $r$, but only on the number of units of demand $x_{i,r}$ assigned to $r$ by player~$i$. We derive that for each player~$i$ and each resource $r$ at most $d_i$ different marginal cost values can occur. This observation bounds the number of different marginal cost vectors of player~$i$ by $(m \cdot d_i)^{d_i}$, where $m = |R|$. Since the marginal cost vectors lexicographically decrease, the total number of iterations of the while-loop for each iteration of the for-loop is bounded by $\sum_{i \in N} (m \cdot d_i)^{d_i}$. Setting $\dmax = \max_{i \in N} d_i$, this expression is bounded by $(n\cdot m \cdot \dmax)^{\dmax}$, where $n = |N|$. Using that there are $\sum_{i \in N} d_i \leq n \cdot \dmax$ iterations of the for-loop, one for each unit of demand in the game, we obtain the following corollary.
\end{proof}

\subsection{Proofs of Lemmata \ref{l.1} and \ref{l.2} needed to prove Theorem~\ref{t.load-increase}}
\label{appendix:load-increase}

In the proof of Theorem \ref{t.load-increase}, it is claimed that
$g\not\preceq h, \text{ and either } F-p+h\in \mathcal{F}(k) \text{ or } F-e+h\in \mathcal{F}(k).$
This  follows as a consequence of the following two Lemmata.

\begin{lemma}\label{l.1}
Let $F\in \mathcal{F}(k)$ be of minimal weight w.r.t.\ the admissible weight function $w$,
and consider function $\bar{w}$ as described in Theorem~\ref{t.load-increase}.
Suppose there exists $F'=F-e+g\in \mathcal{F}(k)$ with $\bar{w}(F')<\bar{w}(F)$,
and $F''=F'-p+h\in \mathcal{F}(k)$ with $\bar{w}(F'')<\bar{w}(F')$.
Then $g\not\preceq h$.
\end{lemma}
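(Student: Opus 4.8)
The plan is to argue by contradiction: suppose $g \preceq h$. Since $g \in (E \setminus F)^-$ and $F' = F - e + g$, the element $g$ is not comparable to $e$ (otherwise $F-e+g$ would fail to be an ideal), so $g$ and $e$ lie in different chains; recall from the proof of Theorem~\ref{t.load-increase} that $e \in K_{r^*}$ and $g, h \notin K_{r^*}$. If $g \preceq h$ with $g \neq h$, then $g$ and $h$ lie in a common chain $K_s$ with $s \neq r^*$, and $g \prec h$. I would first dispose of the case $g = h$: then $F'' = F - e + g - p + g$, which only makes sense if $p = g$, but $p \in F'^+$ and $h = g \notin F'$, so $p \neq h = g$; hence $g \neq h$ and indeed $g \prec h$ strictly.

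Now the key observation is that $\bar w$ and $w$ agree off the chain $K_{r^*}$, so $\bar w(g) = w(g)$ and $\bar w(h) = w(h)$, and by admissibility of $w$ together with $g \prec h$ we get $w(g) \le w(h)$, i.e. $\bar w(g) \le \bar w(h)$. The plan is to use this to ``shortcut'' the two improvement steps into one. Consider the set $F - e + h$. I claim it is an ideal of $P$: the elements $\preceq$-below $h$ in its chain $K_s$ are, apart from $h$ itself, exactly those below $g$ together with $g$ and any in between; since $F' = F - e + g$ is an ideal, all elements $\prec g$ are in $F$, hence in $F - e + h$ (as $e \notin K_s$), and for the elements strictly between $g$ and $h$ in $K_s$ — here I would use that $h \in (E \setminus F')^-$, so the only element of $K_s \setminus F'$ that is $\preceq$-minimal is $h$, forcing every element of $K_s$ strictly below $h$ to already be in $F'$, hence (being $\neq e$) in $F$. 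So $F - e + h$ is an ideal. Feasibility of $F - e + h$ with respect to the rank constraints then follows because $|\bigcup_{r \in U} K_r \cap (F - e + h)| \le |\bigcup_{r \in U} K_r \cap (F - e + g - p + h)| + 1 = |\bigcup_{r \in U} K_r \cap F''| + [p \in \bigcup_{r\in U}K_r]$, and one pushes this through using $F'' \in \mathcal{F}(k)$ and a short case distinction on whether $p$ and $g$ lie in $\bigcup_{r\in U} K_r$; alternatively, and more cleanly, I would observe $F - e + h$ and $F'' = F - e + g - p + h$ differ by swapping $g$ for $p$, and use the exchange property of the matroid $\mathcal{M}$ (Section~\ref{sec:supermatroids}) directly. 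Either way $F - e + h \in \mathcal{F}(k)$.

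Finally, I compare weights: $\bar w(F - e + h) = \bar w(F) - \bar w(e) + \bar w(h)$. From the two given strict inequalities, $\bar w(F'') < \bar w(F') < \bar w(F)$, i.e. $\bar w(F) - \bar w(e) + \bar w(g) - \bar w(p) + \bar w(h) < \bar w(F) - \bar w(e) + \bar w(g)$, which gives $\bar w(h) < \bar w(p)$, and $\bar w(e) > \bar w(g)$ from $\bar w(F') < \bar w(F)$. Hence $\bar w(F-e+h) = \bar w(F) - \bar w(e) + \bar w(h) < \bar w(F) - \bar w(g) + \bar w(h) \le \bar w(F)$, using $\bar w(g) \le \bar w(h)$ only to note it does no harm — actually the clean route is $\bar w(e) > \bar w(g) \ge$ nothing useful, so instead: we already have $\bar w(e) > \bar w(h)$? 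Not directly; so I would instead chain $\bar w(F - e + h) < \bar w(F)$ from $\bar w(h) < \bar w(p)$ is not immediate either. The honest route is: $F - e + h$ is obtained from $F''$ by adding $p$ and the only way this can keep cardinality is it is not; so in fact the correct shortcut is that $F - p + h$ or $F - e + h$ gives a better ideal than $F$ using $\bar w$ restricted to chains other than $K_{r^*}$ equals $w$, contradicting $w$-optimality of $F$ once we check the relevant elements avoid $K_{r^*}$. I expect the main obstacle to be exactly this bookkeeping: carefully verifying which of $F-p+h$, $F-e+h$ is an ideal and feasible, and then extracting from $g \preceq h$ a weight comparison that contradicts either the $w$-optimality of $F$ or the maximality of the improvement $\bar w(e) - \bar w(g)$ in the choice of $\{e,g\}$. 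The poset/ideal structure (who lies below whom in which chain) is where all the real work sits; the matroid exchange axiom and submodular uncrossing are the tools, but applying them requires tracking the chain memberships of $e, g, p, h$ precisely.
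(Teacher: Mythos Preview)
Your central claim—that $F-e+h$ is an ideal of $P$—is false, and this is precisely where the assumption $g\preceq h$ bites. You argue that every element of $K_s$ strictly below $h$ lies in $F'$ and ``hence (being $\neq e$) in $F$.'' But the unique element of $F'\setminus F$ is $g$, not $e$; and $g$ itself is strictly below $h$ in $K_s$. Thus $g\prec h$ while $g\notin F-e+h$, so $F-e+h$ is not downward closed. The matroid exchange axiom does not rescue this, because membership in $\mathcal{F}(k)$ requires being an ideal, not merely an independent set of $\mathcal{M}$.

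Because of this, the weight comparison in your final paragraph never gets off the ground (and indeed you do not reach a contradiction there). The paper's proof avoids $F-e+h$ entirely and instead performs a case split on whether $p\preceq e$:
\begin{itemize}
\item If $p\prec e$ (so $p\in K_{r^*}$), then $p$ is the immediate predecessor of $e$ in $K_{r^*}$, whence $\bar w(p)=w(e)$. Combined with $\bar w(p)>\bar w(h)=w(h)\ge w(g)$ this gives $w(e)>w(g)$, so $w(F-e+g)<w(F)$, contradicting $w$-optimality of $F$.
\item If $p$ and $e$ are incomparable, then $p\notin K_{r^*}$, so $w(p)=\bar w(p)>w(g)$. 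Now one argues about $F-p+g$ (not $F-e+h$): it \emph{is} an ideal, so $w$-optimality of $F$ forces $F-p+g\notin\mathcal{F}(k)$, yielding a tight set $S\subseteq R$ with $g\in\bigcup_{r\in S}K_r$, $p\notin\bigcup_{r\in S}K_r$. Feasibility of $F''=F-e+g-p+h$ then forces $e\in\bigcup_{r\in S}K_r$ and $h\notin\bigcup_{r\in S}K_r$, contradicting $g\preceq h$.
\end{itemize}
Note also that the maximality of $\bar w(e)-\bar w(g)$ is \emph{not} a hypothesis of the lemma, so you should not invoke it; the paper's argument does not use it either.
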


\begin{proof}
If $g\preceq h$, then
$\bar{w}(p)>\bar{w}(h)\ge w(h) \ge w(g)$.
We consider first the case where $p\preceq e \in K_{r^*}$.
Then, by construction of $\bar{w}$ we have
$w(e)=\bar{w}(p)>\bar{w}(h)=w(h)\ge w(g)$,
implying $w(F-e+g)<w(F)$ in contradiction to the $w$-optimality of $F$.

Thus, $p$ and $e$ are incomparable.
Since $w(p)> w(g)$,
the $w$-optimality of $F$ implies $F-p+g\notin \F(k)$.
But $F-p+g$ is an ideal in $P$ (as $p\not\preceq e$), so there must exist some set $S\subseteq R$ with
$|\bigcup_{r\in S} K_r \cap (F-p+g)|=f(S)+1$ and $|\bigcup_{r\in S} K_r \cap F|=f(S)$.
It follows that $g\in \bigcup_{r\in S} K_r$, and  $p\notin \bigcup_{r\in S} K_r$.

On the other hand,
we know that $F-e+g-p+h\in \F(k)$, implying that $e\in \bigcup_{r\in S} K_r$, and $h\notin \bigcup_{r\in S} K_r$.
However, $g\in \bigcup_{r\in S} K_r$ and $h\notin \bigcup_{r\in S} K_r$ is a contradiction to $g\preceq h$.
\qed\end{proof}

\begin{lemma}\label{l.2}
Suppose that $F\in \F(k), F'=F-e+g\in \F(k)$, $F''=F'-p+h\in \F(k)$, and $g\not \preceq h$.
Then either $F-p+h\in \F(k)$ or $F-e+h\in \F(k)$.
\end{lemma}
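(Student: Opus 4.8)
The plan is to argue by contradiction: assume that $F-p+h \notin \mathcal{F}(k)$ and $F-e+h \notin \mathcal{F}(k)$, and derive a contradiction from the feasibility of $F$, $F' = F-e+g$, $F'' = F'-p+h$ together with $g \not\preceq h$. Since $F'' = F-e+g-p+h \in \mathcal{F}(k)$ and all four sets $F$, $F'$, $F''$ have cardinality $k$, the symmetric difference between $F$ and $F''$ involves exactly the elements $\{e,g,p,h\}$ (with possible coincidences, e.g.\ $g = p$, which I would dispose of as a trivial sub-case first). So $F-e+h$ and $F-p+h$ differ from $F$ by a single swap each; the first thing to check is that both are \emph{ideals} in the poset $P$. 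For $F - e + h$: removing a $\preceq$-maximal element $e$ of $F$ and adding a $\preceq$-minimal element $h$ of $E \setminus F$ keeps the ideal property unless $h$ lies above $e$ in the same chain, which can only fail in the case $e, h \in K_{r^*}$ — a case I would isolate separately using that $g \notin K_{r^*}$. Similarly one checks $F-p+h$ is an ideal, using that $p \in F'$ and $h \notin F'$ are minimal/maximal as needed.

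Granting that $F-e+h$ and $F-p+h$ are ideals but fail to lie in $\mathcal{F}(k)$, the only obstruction is a violated submodular constraint: there exist $S, T \subseteq R$ with
\begin{align*}
\Bigl|\bigcup_{r \in S} K_r \cap (F - e + h)\Bigr| &= f(S) + 1, & \Bigl|\bigcup_{r \in S} K_r \cap F\Bigr| &= f(S),\\
\Bigl|\bigcup_{r \in T} K_r \cap (F - p + h)\Bigr| &= f(T) + 1, & \Bigl|\bigcup_{r \in T} K_r \cap F\Bigr| &= f(T).
\end{align*}
From the first pair, $h \in \bigcup_{r \in S} K_r$ and $e \notin \bigcup_{r \in S} K_r$; from the second, $h \in \bigcup_{r \in T} K_r$ and $p \notin \bigcup_{r \in T} K_r$. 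Now I would uncross: by submodularity of $f$ and the integrality argument already used in the proof of Theorem~\ref{t.load-increase} (equation \eqref{eq:star3}), $S \cap T$ is also tight on $F$, i.e.\ $|\bigcup_{r \in S \cap T} K_r \cap F| = f(S \cap T)$, and moreover $h \in \bigcup_{r \in S \cap T} K_r$ while $e \notin \bigcup_{r \in S \cap T} K_r$ and $p \notin \bigcup_{r \in S \cap T} K_r$.

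To reach the contradiction I would now test the feasibility of $F'' = F - e + g - p + h$ against the set $U := S \cap T$ (or a suitable variant). We know $F''$ is feasible, so $|\bigcup_{r \in U} K_r \cap F''| \le f(U)$. Since $h \in \bigcup_{r \in U} K_r$ and $e, p \notin \bigcup_{r \in U} K_r$, whether $g \in \bigcup_{r \in U} K_r$ or not governs the count: if $g \notin \bigcup_{r \in U} K_r$ then $|\bigcup_{r \in U} K_r \cap F''| = f(U) + 1$, a contradiction; hence $g \in \bigcup_{r \in U} K_r$. But $g$ and $h$ lie in the same chain $K_r$ for some $r \in U$, and from $g \not\preceq h$ it follows that $h \prec g$ (they are comparable, being in the same chain). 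Then, because $F$ is an ideal containing $g$ (as $g \in F$? — careful: $g \notin F$; rather $g \in F'' $, $g \notin F$), I would instead exploit that $F' = F - e + g$ is an ideal containing $g$, forcing $h \in F'$, i.e.\ $h \in F$ (since $h \neq e$), contradicting $h \notin F$. The main obstacle is precisely bookkeeping the chain-membership of $\{e,g,p,h\}$ across $S$, $T$, and their intersection, and making the ideal/comparability argument yield exactly the contradiction with $g \not\preceq h$; this requires care about which of $F$, $F'$, $F''$ each element belongs to, and a clean separation of the degenerate cases ($e,h \in K_{r^*}$, or $g = p$).
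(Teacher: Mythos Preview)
Your overall strategy---contradiction via tight sets plus a submodular uncrossing---is exactly the paper's approach, and your choice to uncross to $S\cap T$ rather than $S\cup T$ is in fact the cleaner option: with $U=S\cap T$ you immediately get $e\notin K_U$ (from $e\notin K_S$) and $p\notin K_U$ (from $p\notin K_T$), so $|K_U\cap F''|=f(U)+[g\in K_U]+1\ge f(U)+1$, contradicting $F''\in\F(k)$ regardless of where $g$ lies. This also means your final detour (``hence $g\in K_U$, so $g$ and $h$ lie in the same chain\ldots'') is both unnecessary and incorrect: membership in $\bigcup_{r\in U}K_r$ does not force $g$ and $h$ into the \emph{same} chain, but you never needed that.

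The genuine gap is the ideal check for $F-p+h$, which you dismiss with ``similarly''. The element $p$ is maximal in $F'=F-e+g$, but it need not be maximal in $F$: if $p\prec e$ (both in the same chain), then $e$ sits above $p$ in $F$, and $F-p+h$ is \emph{not} an ideal. In that situation you cannot produce the tight set $T$ at all, and your uncrossing argument does not get off the ground. The paper handles this by an explicit case split: when $p\prec e$, it works only with $F-e+h$ and observes that any tight set $S$ witnessing $F-e+h\notin\F(k)$ has $e\notin K_S$, hence (same chain) also $p\notin K_S$; then $|K_S\cap F''|\ge f(S)+1$ already contradicts $F''\in\F(k)$, without any second tight set or uncrossing. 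You should add exactly this case before proceeding to the two-set argument. (Your remark about isolating ``$e,h\in K_{r^*}$'' is not quite on point either: Lemma~\ref{l.2} is stated without reference to $r^*$, and the reason $F-e+h$ is always an ideal is precisely the hypothesis $g\not\preceq h$, which guarantees that every predecessor of $h$, lying in $F'$, actually lies in $F-e$.)
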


\begin{proof}
Consider first the case where $p\prec e$.
We know that $F-e+h$ is an ideal in $P$, since $g\not\preceq h$.
For the sake of contradiction, we assume that
$F-e+h\notin \F(k)$.
Then there exists some set $S\subseteq R$ with
$|\bigcup_{r\in S} K_r \cap (F-e+h)|=f(S)+1$.
It follows that $h\in \bigcup_{r\in S} K_r$, and  $e\notin \bigcup_{r\in S} K_r$.
Since we assume $e$ and $p$ to lie in the same chain, it follows that $p\notin \bigcup_{r\in S} K_r$.
However, this implies
$|\bigcup_{r\in S} K_r \cap (F-e +g -p +h)|\ge f(S)+1$, in contradiction to $F''=F-e+g-p+h\in \F(k)$.

It follows that $p\not\preceq e$, so that both, $F-p+h$ and $F-e+h$ are ideals in $P$.
Now we can use the same arguments as above: if neither
$F-p+h$, nor $F-e+h$ belongs to $\F(k)$, there exists a set $S\subseteq R$ with
$|\bigcup_{r\in S} K_r \cap (F-p+h)|=f(S)+1$ and $|\bigcup_{r\in S} K_r \cap F|=f(S)$,
implying $h\in \bigcup_{r\in S} K_r$, and  $p\notin \bigcup_{r\in S} K_r$,
and there also exists a set $T\subseteq R$ with
$|\bigcup_{r\in T} K_r \cap (F-e+h)|=f(T)+1$ and $|\bigcup_{r\in S} K_r \cap F|=f(T)$,
implying $h\in \bigcup_{r\in T} K_r$, and  $e\notin \bigcup_{r\in T} K_r$.
Since $|\bigcup_{r\in S} K_r \cap F|=f(S)$ and $|\bigcup_{r\in T} K_r \cap F|=f(T)$, ich follows by the submodularity of $f$ that
$|\bigcup_{r\in S\cup T} K_r \cap F|=f(S\cup T).$
Since
$\bigcup_{r\in S\cup T} K_r \cap \{e,g,p,h\}=\{p,h,e\}$, we have
$|\bigcup_{r\in S\cup T} K_r \cap (F-e+g-p+h)|\ge f(S\cup T)+1.$ A  contradiction to
$F''=F-e+g-p+h\in \F(k)$.
\qed\end{proof}

\end{document}